\documentclass{llncs}

\usepackage{amsfonts,amsmath,amssymb}
\usepackage[mathscr]{euscript}
\usepackage{hyperref,paralist}
\usepackage{todonotes}
\usepackage{etex}
\usepackage{microtype}

\usepackage[linesnumbered,ruled,vlined]{algorithm2e}
\SetKw{KwLet}{let}
\SetKw{KwAbort}{abort}
\SetKw{KwChoose}{choose}
\SetKwFor{Forall}{for all}{do}{end}
\SetKwFor{Func}{function}{}{}
\usepackage{listings}
\lstset{language=C,basicstyle={\fontfamily{ppl}\selectfont}}
\usepackage{pstricks,pst-node,pst-plot,pstricks-add}

\newcommand{\rronly}[1]{#1}
\newcommand{\pponly}[1]{}

\let\oldfootnotesize\footnotesize
\renewcommand*{\footnotesize}{\pponly{\scriptsize}}

\newcommand{\negskip}{\pponly{\vspace*{-1.5ex}}}

\newcommand{\init}{I}
\newcommand{\trans}{T}
\newcommand{\sset}{\Sigma}
\newcommand{\iset}{\Upsilon}

\newcommand{\eg}{e.g.}
\newcommand{\ie}{i.e.}
\newcommand{\sep}{$\vee$}

\newcommand{\nat}{\mathbb{N}}

\newcommand{\sinit}{I}
\newcommand{\sfinal}{F}
\newcommand{\assume}{\varphi}
\newcommand{\assert}{\psi}
\newcommand{\chain}{\chi}
\newcommand{\true}{\mathit{true}}
\newcommand{\false}{\mathit{false}}

\newcommand{\chpath}{\mathit{CheckPath}}
\newcommand{\chkreach}{\mathit{checkKreach}}
\newcommand{\getkedges}{\mathit{GetKreachEdges}}
\newcommand{\failedpath}{\mathit{failed\_path}}

\renewcommand{\paragraph}[1]{\noindent\textbf{#1}.~}

\title{Chaining Test Cases\\
for Reactive System Testing\thanks{Supported by
the EU FP7 STREP PINCETTE, the ARTEMIS VeTeSS project, and
ERC project 280053.}\rronly{\\{\normalsize(extended version)}}}
\author{Peter Schrammel \and Tom Melham \and Daniel Kroening}
\institute{University of Oxford\\ Department of Computer Science\\ \texttt{first.lastname@cs.ox.ac.uk}}

\begin{document}
\maketitle

\begin{abstract}

Testing of synchronous reactive systems is challenging because long input
sequences are often needed to drive them into a state to test a
desired feature.  This is particularly problematic in
\textit{on-target testing}, where a system is tested in its real-life
application environment and the amount of time required for resetting is high.
This paper presents an approach to discovering a \textit{test case
chain}---a single software execution that covers a group of test goals and
minimises overall test execution time.  Our technique targets the scenario
in which test goals for the requirements are given as safety properties.  We
give conditions for the existence and minimality of a single test case chain
and minimise the number of test case chains if a single test case chain is infeasible.
We report experimental results with a prototype tool for C code generated from
\textsc{Simulink} models and compare it to state-of-the-art test suite
generators.
\end{abstract}


\section{Introduction}\label{sec:intro}

Safety-critical embedded software, e.g., in the automotive or avionics
domain, is often implemented as a \emph{synchronous reactive system}.
These systems
compute their new state and their output as
functions of old state and the given inputs.  As these systems frequently
have to satisfy high safety standards, tool support for systematic testing
is highly desirable. The completeness of the testing process is frequently
measured by defining a set of \emph{test goals}, which are typically
formulated as reachability properties. A good-quality test suite is a set
of input sequences that drive the system into states that cover
a large fraction of those goals.

Test suites generated by random test generators often contain a huge
number of redundant test cases.
Directed test case generation often
requires lengthy input sequences to drive the system into a state where
the desired feature can be tested.
Furthermore, to execute the test suite, test cases must be chained
manually or the system must be reset after
executing each test case.  This is a serious problem in \textit{on-target
testing}, where a system is tested in its real-life application environment
and resetting might be very time-consuming~\cite{HU10}.

\nopagebreak

This paper presents an approach to discovering a \textit{test case
chain}---a single test case that covers a set of multiple test goals and
minimises overall test execution time.
The essence of the problem is to find a shortest path through the system
that covers all the test goals.

\paragraph{Example}
To illustrate the problem and our approach, we reuse the classical
cruise controller example given in \cite{Bos07}. There are five Boolean inputs,
two for actuation of the \emph{gas} and \emph{brake}
pedals, a toggle \emph{button} to enable the cruise control, and two
sensors indicating whether the car is \emph{acc}- or
\emph{dec}elerating. There are three state variables: \emph{speed},
\emph{enable}, which is true when cruise control is enabled, and
\emph{mode} indicating whether cruise control is turned
\emph{OFF}, actually active (\emph{ON}), or temporarily inactive, \ie,
\emph{DIS}engaged while user pushes the gas or brake pedal.
A C implementation, with the
structure typical of code generated from \textsc{Simulink} models, is
given in Fig.~\ref{fig:code-ex-cruise} and its state machine is
depicted in Fig.~\ref{fig:ex-cruise}. 
The function \texttt{compute} is executed periodically (\eg~on a timer
interrupt). Thus, there is
a notion of \emph{step} that relates to execution time.
%
\begin{figure}[t]
{\vspace*{-1ex} \scriptsize 
\begin{lstlisting}
void init(t_state *s) { s->mode = OFF; s->speed = 0; s->enable = FALSE; }
void compute(t_input *i, t_state *s) {
  mode = s->mode;
  switch(mode) {
    case ON: if(i->gas || i->brake) s->mode=DIS; break;
    case DIS: 
      if( (s->speed==2 && (i->dec || i->brake)) || (s->speed==0 && (i->acc || i->gas)) ) 
        s->mode=ON;
      break;
    case OFF: 
      if( s->speed==0 && s->enable && (i->gas || i->acc) ||
          s->speed==1 && i->button ||
          s->speed==2 && s->enable && (i->brake || i->dec) ) 
        s->mode=ON;
      break;
  }
  if(i->button) s->enable = !s->enable;
  if((i->gas || mode!=ON && i->acc) && s->speed<2) s->speed++;
  if((i->brake || mode!=ON && i->dec) && s->speed>0) s->speed--;
}
\end{lstlisting}
}
\vspace*{-2ex}
\caption{Code generated for cruise controller
example\label{fig:code-ex-cruise}}
\end{figure}
%

We formulate some LTL properties for which we want to
generate test cases:\\
\begin{tabular}{@{\hspace{2em}}l@{:}@{\hspace{0.5em}}l}
$p_1$ & $\mathbf{G}\big(mode=\mathit{ON}  \wedge speed=1 \wedge dec \Rightarrow \mathbf{X}(speed=1)\big)$ \\
$p_2$ & $\mathbf{G}\big(mode=\mathit{DIS} \wedge speed=2 \wedge dec \Rightarrow \mathbf{X}(mode=ON)\big)$ \\
$p_3$ & $\mathbf{G}\big(mode=\mathit{ON}  \wedge brake \Rightarrow \mathbf{X}(mode=\mathit{DIS})\big)$ \\
$p_4$ & $\mathbf{G}\big(mode=\mathit{OFF} \wedge speed=2 \wedge \neg enable \wedge button \Rightarrow \mathbf{X}~enable\big)$ 
\end{tabular}

We observe that each of the properties above relates to a particular
transition in the state machine (shown as bold edge labels in
Fig.~\ref{fig:ex-cruise}).  A \emph{test case} is a sequence of inputs that
determines a (bounded) execution path through the system.  The \emph{length}
of a test case is the length of this sequence.  A test case \emph{covers a
property} if it triggers the transition the property relates to.  A
\emph{test suite} is a set of test cases that covers all the properties.

Ideally, we can obtain a single test case that covers all
properties in a single execution.  We call a test case that covers a
sequence of properties a \emph{test case chain}.
Our goal is to synthesise minimal test case chains---test case chains with
fewest transitions. It is not always possible
to generate a single test case chain that covers all properties;
multiple test case chains may be required.
%
\begin{figure}[t]
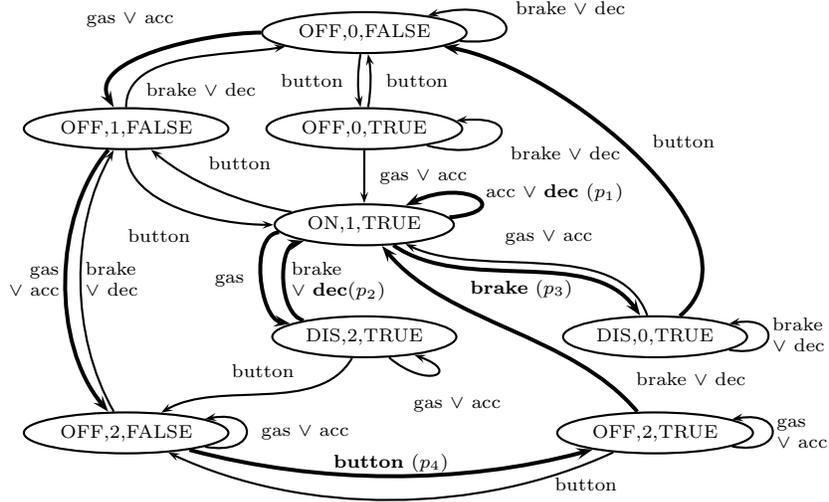

\scriptsize
\centering
\psset{arrows=->}
\begin{tabular}{@{\hspace{-3em}}c@{\hspace{1.5em}}c@{\hspace{4.2em}}c@{}}
 & \ovalnode{n000}{OFF,0,FALSE} \\[8ex]
\ovalnode{n010}{OFF,1,FALSE} & \ovalnode{n001}{OFF,0,TRUE} \\[8ex]
 & \ovalnode{n111}{ON,1,TRUE} \\[10ex]
 & \ovalnode{n221}{DIS,2,TRUE} & \ovalnode{n201}{DIS,0,TRUE} \\[8ex]
\ovalnode{n020}{OFF,2,FALSE} && \ovalnode{n021}{OFF,2,TRUE}
\end{tabular}
\nccurve[angleA=-5,angleB=15,ncurv=4]{n000}{n000}\nbput{brake \sep~dec}
\nccurve[angleA=-15,angleB=5,ncurv=4]{n001}{n001}\nbput{brake \sep~dec}
\nccurve[angleA=-10,angleB=10,ncurv=4]{n201}{n201}\nbput{$\begin{array}{@{\hspace{-0.5em}}l}\text{brake}
     \\ $\sep$~\text{dec}\end{array}$}
\nccurve[angleA=-40,angleB=-20,ncurv=4]{n221}{n221}\nbput{gas \sep~acc}
\nccurve[angleA=-10,angleB=10,ncurv=4]{n020}{n020}\nbput{gas \sep~acc}
\nccurve[angleA=-10,angleB=10,ncurv=4]{n021}{n021}\nbput{$\begin{array}{@{\hspace{-0.5em}}l}\text{gas}
     \\ $\sep$~\text{acc}\end{array}$}
\nccurve[linewidth=1.5pt,angleA=5,angleB=25,ncurv=3]{n111}{n111}\nbput[npos=0.4,labelsep=0.1pt]{acc
  \sep~\textbf{dec} $(p_1)$}
\nccurve[linewidth=1.5pt,angleA=180,angleB=120]{n000}{n010}\nbput{gas \sep~acc}
\nccurve[angleA=90,angleB=-170]{n010}{n000}\nbput{brake \sep~dec}
\nccurve[angleA=-100,angleB=100]{n000}{n001}\nbput{button}
\nccurve[angleA=80,angleB=-80]{n001}{n000}\nbput{button}
\nccurve[angleA=-90,angleB=180]{n010}{n111}\nbput{button}
\nccurve[angleA=170,angleB=-40]{n111}{n010}\nbput{button}
\nccurve[linewidth=1.5pt,angleA=-15,angleB=-165]{n020}{n021}\naput[labelsep=0.1pt]{\textbf{button}
    $(p_4)$}
\nccurve[angleA=-155,angleB=-25]{n021}{n020}\naput[labelsep=0.1pt,npos=0.1]{button}
\nccurve[angleA=-120,angleB=30]{n221}{n020}\nbput{button}
\nccurve[linewidth=1.5pt,angleA=40,angleB=-10]{n201}{n000}\nbput{button}
\nccurve[angleA=-90,angleB=90]{n001}{n111}\naput{gas \sep~acc}
\nccurve[angleA=110,angleB=-25]{n201}{n111}\nbput{gas \sep~acc}
\nccurve[linewidth=1.5pt,angleA=-35,angleB=125]{n111}{n201}\nbput[labelsep=0.1pt]{\textbf{brake}
   $(p_3)$}
\nccurve[linewidth=1.5pt,angleA=-130,angleB=130]{n010}{n020}\nbput{$\begin{array}{r@{\hspace{-0.5em}}}\text{gas}
     \\ $\sep$~\text{acc}\end{array}$}
\nccurve[linewidth=1.5pt,angleA=-175,angleB=170]{n111}{n221}\nbput{gas}
\nccurve[angleA=120,angleB=-120]{n020}{n010}\nbput{$\begin{array}{@{\hspace{-0.5em}}l}\text{brake}
     \\ $\sep$~\text{dec}\end{array}$}
\nccurve[linewidth=1.5pt,angleA=165,angleB=-165]{n221}{n111}\nbput{$\begin{array}{@{\hspace{-0.5em}}l}\text{brake}
     \\ $\sep$~\textbf{dec} (p_2)\end{array}$}
\nccurve[linewidth=1.5pt,angleA=130,angleB=-50]{n021}{n111}\nbput[npos=0.06]{brake \sep~dec}
\vspace*{5ex}
\caption{\label{fig:ex-cruise} State machine of the example.
  Edges are labelled by inputs and nodes
  by state  $\langle \mathit{mode},\mathit{speed},\mathit{enable}\rangle$.
  Properties are in bold, bold edges show a minimal test case chain.}
\end{figure}

We compute such a minimal test case chain from a set of start states $\sinit$ via
a set of given properties $P=\{p_1,p_2,\ldots\}$ to a set of final states
$\sfinal$.  For our example, with $\sinit = \sfinal = \{mode=\mathit{OFF} \wedge
speed=0 \wedge \neg enable\}$ and $P=\{p_1, p_2, p_3, p_4\}$, for instance,
we obtain the test case chain consisting of the bold edges in
Fig.~\ref{fig:ex-cruise}.  First, this chain advances to $p_4$, then covers
$p_1$, $p_2$, and $p_3$, and finally goes to $\sfinal$.  One can assert that
this path has the minimal length of 9~steps.

Testing problems similar to ours have been addressed by research on
\emph{minimal checking sequences} in
conformance
testing~\pponly{\cite{NMHN13,PSY12,HU10}}\rronly{\cite{NMHN13,PSY12,HU10,HU06,Hie04}}.
This work analyses~automata-based specifications
that encode system control and have transitions labelled with
operations on data variables.
The challenge here is to find short transition paths based
on a given coverage criterion that are feasible, \ie~consistent with
the data operations.  Random test case generation can then be used to
discover such a path.
In contrast, our approach analyses the code generated from models or
the implementation code itself, and it can handle partial
specifications expressed as a collection of safety properties.  A
common example is acceptance testing in the automotive domain.  Our
solution uses bounded model checking to generate test cases guaranteed
to exercise the desired functionality.

\paragraph{Contributions}
The contributions of this paper can be summarised as follows:
\begin{compactitem}
\item We present a new algorithm to compute minimal test
  chains that first constructs a weighted digraph
  abstraction using a reachability analysis, on which the minimisation
  is performed as a second step. The final step is to compute the test input
  sequence. We give conditions for the existence and
  minimality of a single test case chain and propose algorithms to
  handle the general case.
\item We have implemented a tool, \textsc{ChainCover}\rronly{\footnote{\url{http://www.cprover.org/chaincover/}}}, for C
  code generated from \textsc{Simulink} models, on top of the
  \textsc{Cbmc} bounded model checker and the \textsc{Lkh}
  travelling salesman problem solver.
\item We present experimental results to demonstrate that our approach is
  viable on a set of benchmarks, mainly from automotive industry, and is
  more efficient than state-of-the-art test suite generators.
\end{compactitem}

\section{Preliminaries}\label{sec:prelim}

\paragraph{Program model}
A \emph{program} is given by $(\sset,\iset,\trans,\init)$ with
finite sets of states $\sset$ and inputs $\iset$, a transition
relation $\trans \subseteq (\sset\times\iset\times\sset)$, and a set of
initial states $\sinit\subseteq\sset$.
An \emph{execution} of a program is a (possibly) infinite sequence of transitions 
$s_0\xrightarrow{i_0}  s_1 \xrightarrow{i_1 } s_2 \rightarrow\ldots$
with $s_0\in\init$ and for all $k\geq 0$, $(s_k,i_k,s_{k+1})\in\trans$.

\paragraph{Properties}
We consider specifications given as a set of safety properties
$P=\{p_1,\ldots,p_{|P|}\}$. 
The properties are given as a formula over
state variables $s$ and input variables $i$ and are of the form
$\mathbf{G}\big(\assume\Rightarrow \assert
\big)$ where $\assume$ describes an \emph{assumption} and $\assert$ is
the \emph{assertion} to be checked.  
$\assume$ specifies a test goal, whereas $\assert$ defines the test
outcome; hence, for test case generation, only $\assume$ is needed.
We denote by $\Pi$ the set of property assumptions.
$\assume$ is a temporal logic formula
built using the operators $\wedge,\vee,\neg,\mathbf{X}$, \ie, 
it describes sets of finite paths.
An execution $\pi=\langle s_0,s_1,\ldots\rangle$ \emph{covers} a
property iff it contains a subpath $\langle s_k,\ldots s_{k+j}\rangle$
that satisfies~$\assume$
($j$ is the nesting depth of $\mathbf{X}$ operators in $\assume$), 
\ie, 
\rronly{$}$\exists k\geq 0: \exists i_k,\ldots,i_{k+j}:
\assume(s_k,i_k,\ldots,s_{k+j},i_{k+j}) \wedge \bigwedge_{k\leq m\leq
  k+j} T(s_m,i_m,s_{m+1}).$\rronly{$}
We call the set of states $s_k$ satisfying $\assume$ the \emph{trigger}
$\widehat{\assume}$ of the property.

For our method, it is not essential whether $\assume$ describes a set
of paths or just a set of states; thus, to simplify the presentation,
we assume that the property assumptions do not contain $\mathbf{X}$
operators.
Single-step transition properties
$\mathbf{G}\big(\assume\Rightarrow \mathbf{X}\assert\big)$
fall into this category, for example.
In this case, $\assume$ is equivalent to its trigger $\widehat{\assume}$.

Moreover, we assume that property assumptions are non-overlapping,
\ie~the sub-paths satisfying the assumptions do not share any edges.
Our minimality results only apply to such specifications.  Detecting
overlappings is a hard problem~\cite{BU91} that goes beyond the scope
of this paper.

\paragraph{Test cases}
A \emph{test case} is an input sequence $\langle
i_0,\ldots,i_n\rangle$ and generates an execution $\pi=\langle
s_0,\ldots,s_{n+1}\rangle$.
A test case \emph{covers} a property $p$ iff its execution covers the property.

\section{Chaining Test Cases}\label{sec:chain}

\paragraph{The problem}
We are given a program $(\sset,\iset,\trans,\init)$, properties $P$, 
and a set of final states $\sfinal\subseteq\sset$.
A \emph{test case chain} \rronly{$\chain$}
is a test case $\langle
i_0,\ldots,i_n\rangle$ that covers all properties in $P$,
\ie,~its execution $\langle s_0,\ldots,s_{n+1}\rangle$
starts in $s_0\in\sinit$, ends
in $s_{n+1}\in\sfinal$, and covers all properties in $P$.
A \emph{minimal test case chain} is a test case chain of minimal length.
The final states $\sfinal$ are used to ensure the
test execution ends in a desired state, \eg~``engines off'' or ``gear
locked in park~mode''.

\bigskip\paragraph{Our approach}
We now describe our basic algorithm, which has three steps:
\begin{compactenum}[(1)]
\item \emph{Abstraction:} We construct a \emph{property K-reachability
  graph} of the system. This is a weighted, directed graph with nodes
  representing the properties and edges labelled with the number of
  states through which execution must pass, up to length $K$, between
  the properties.
\item \emph{Optimisation:} We determine the shortest path that covers
  all properties in the abstraction. 
\item \emph{Concretisation:} Finally, we compute the corresponding concrete
  test case chain along the abstract path. 
\end{compactenum}
We discuss the conditions under which we obtain the \emph{minimal} test case chain.
\pponly{
Due to space limitations, we refer to the
extended version \cite{SMK13b} for the pseudo-code of the
algorithms and the proofs omitted in this paper.}  
\rronly{This algorithm is given as Alg.~\ref{alg:minchain1}. 

\begin{algorithm}[ht]
\KwIn{program $(\sset,\iset,\trans,\init)$, properties
  $P$, formulas $\sinit$, $\sfinal$, reachability bound $K$}
\KwOut{test case chain $\chain=\langle i_0,\ldots,i_N\rangle$}
$G = \mathit{BuildPropKReachGraph}(P,\sinit,\sfinal,\trans,K)$\\
$\pi = \mathit{GetShortestPath}(G,\sinit,\sfinal)$\\
$\chain = \mathit{GetChain}(G,\pi,\trans)$\\ 
\Return \rronly{$\chain$}
\caption{\label{alg:minchain1} 
Compute test case chain}
\end{algorithm}
}

\subsection{Abstraction: Property K-Reachability Graph}

The \emph{property $K$-reachability graph} is an abstraction of the original
program by a weighted, directed graph $(V,E,W)$, with 
\begin{compactitem}

\item vertices $V=\Pi \cup\{\sinit,\sfinal\}$, all defining property
  assumptions, including formulas describing the sets $\sinit$ and
  $\sfinal$,

\item edges $E \subseteq E_{target} \subset V\times V$, as explained below, and 

\item an edge labelling $W: E\rightarrow \nat$ assigning to each
  $(\assume,\assume')\in E$ the minimal number of steps bounded by $K$
  needed to reach some state satisfying $\assume'$ 
  from any state satisfying  $\assume$ 
  according to the program's transition relation $\trans$.

\end{compactitem}

\smallskip\noindent Fig.~\ref{fig:ex-chain1} shows the property
2-reachability graph for our example.

\smallskip\paragraph{Graph construction}
\rronly{
The graph is constructed by the function
\textit{BuildProp\-KReachGraph} (Alg.~\ref{alg:buildgraph1}).
The main work is done by the function
$\getkedges$ $((V,E,W),\trans,E_{target},k)$, which computes the subset
of edges $E_k$ that have weight $k$ in the set of interesting edges
$E_{target}$.
The constructed graph contains an edge $(\assume,\assume')$ with weight
$k$ iff for the two properties with assumptions $\assume$ and $\assume'$,
a state in $\assume'$ is reachable from a state $\assume$ in $k\leq K$ steps, and $k$ is the
minimal number of steps for reaching $\assume'$ from $\assume$.
We stop the construction of the graph if a path has been found (line~5).
$\mathit{ExistsPath}$ is explained below.
If we fail to find a path before reaching a given
reachability bound $K$, or there is no path although the graph
contains all edges in $E_{target}$, then we abort (line 6).}
\pponly{ 
The graph is constructed by iteratively calling a function
  $\getkedges$ that returns the subset of edges that have weight $k$
  in the set of interesting edges 
$
E_{target} =
  \left(\bigcup_{\assume_j\in\Pi}
  \{(\sinit,\assume_j),(\assume_j,\sfinal)\}\right) \cup
  \{(\assume_j,\assume_\ell)\mid \assume_j,\assume_\ell\in\Pi, j\neq \ell\}.
$
  $E_{target}$ contains all pairwise links between the nodes
  $\assume_j$, links from $\sinit$ to all nodes $\assume_j$, and from
  every $\assume_j$ to $\sfinal$.
  $\getkedges$ (\eg~implemented using constraint solving) is called
  for increasing values of $k$ and the obtained edges are added to the
  graph until a \emph{covering path} exists, \ie, a path from $\sinit$
  to $\sfinal$ visiting all nodes at least once.
If we fail to find a path before reaching a given
reachability bound $K$, or there is no path although the graph
contains all edges in $E_{target}$, then we abort.
The constructed graph contains an edge $(\assume,\assume')$ with weight
$k$ iff for the two properties with assumptions $\assume$ and $\assume'$,
a state in $\assume'$ is reachable from a state $\assume$ in $k\leq K$ steps, and $k$ is the
minimal number of steps for reaching $\assume'$ from $\assume$.
}

\begin{figure}[t]
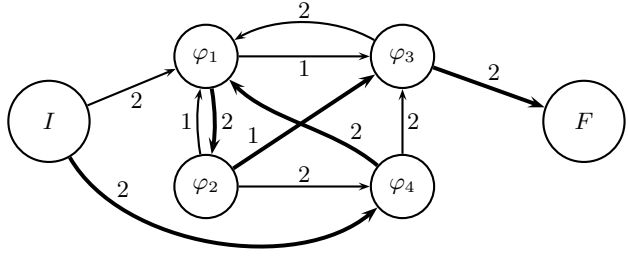

\centering
\vspace*{1ex}
\oldfootnotesize
\psset{arrows=->,labelsep=1pt}
\begin{tabular}{c@{\hspace{3em}}c@{\hspace{5em}}c@{\hspace{4em}}c}
 & \circlenode{p1}{\makebox[1.75em]{$\assume_1$}} & \circlenode{p3}{\makebox[1.75em]{$\assume_3$}} \\[-1ex]
\circlenode{i}{\makebox[2.5em]{$\sinit$}} &&& \circlenode{f}{\makebox[2.5em]{$\sfinal$}} \\[-1ex]
& \circlenode{p2}{\makebox[1.75em]{$\assume_2$}} &  \circlenode{p4}{\makebox[1.75em]{$\assume_4$}} 
\end{tabular}
\ncline{i}{p1}\nbput{2}
\nccurve[linewidth=1.5pt,angleA=-80,angleB=80]{p1}{p2}\naput{2}
\nccurve[angleA=100,angleB=-100]{p2}{p1}\naput{1}
\nccurve[angleA=0,angleB=180]{p1}{p3}\nbput{1}
\nccurve[linewidth=1.5pt,angleA=-60,angleB=-140]{i}{p4}\naput[npos=0.2]{2}
\nccurve[angleA=150,angleB=30]{p3}{p1}\nbput{2}
\nccurve[angleA=0,angleB=180]{p2}{p4}\naput{2}
\nccurve[linewidth=1.5pt,angleA=140,angleB=-45]{p4}{p1}\nbput[npos=0.2]{2}
\ncline{p4}{p3}\nbput{2}
\ncline[linewidth=1.5pt]{p3}{f}\naput{2}
\ncline[linewidth=1.5pt]{p2}{p3}\naput[npos=0.2]{1}
\vspace*{2ex}
\caption{\label{fig:ex-chain1} Test case chaining: property
  $K$-reachability graph (for $K=2$) and minimal test case chain of
  length $n=9$ (bold edges) for our example (Fig.~\ref{fig:ex-cruise}).}
\end{figure}

\rronly{
\begin{algorithm}[ht]
\KwIn{property assumptions $\Pi$, formulas $\sinit$, $\sfinal$, 
transition function $\trans$, reachability bound $K$}
\KwOut{weighted, directed graph $(V,E,W)$}
$V\gets \Pi\cup\{\sinit,\sfinal\}$\\
$E\gets\emptyset$, $W\gets\emptyset$\\
$E_{target}\gets \left(\bigcup_{\assume_j\in\Pi} \{(\sinit,\assume_j),(\assume_j,\sfinal)\}\right) \cup
\{(\assume_j,\assume_k)\mid \assume_j,\assume_k\in\Pi, j\neq k\}$\\
$k\gets 0$\\
\While{$\neg\mathit{ExistsPath}((V,E,W),\sinit,\sfinal)$}{
  \lIf{$k>K \vee E_{target}=\emptyset$}{\KwAbort ``no chain found for given bound $K$''}\\
  \KwLet $E_k = \getkedges((V,E,W),\trans,E_{target},k)$\\
  $E\gets E \cup E_K$, 
  $E_{target}\gets E_{target} \setminus E_k$\\
  \lForall{$e\in E_k$}{$W\gets W\cup \{e\mapsto k\}$}\\
  $k\gets k+1$
}
\Return $(V,E,W)$
\caption{\label{alg:buildgraph1}
$\mathit{BuildPropKReachGraph}$
}
\end{algorithm}
}

\smallskip\paragraph{Existence of a covering path}
\rronly{
 Alg.~\ref{alg:buildgraph1} requires to check for the existence of a
 covering path (function $\mathit{ExistsPath}$) in each
 iteration. 
}
The existence of a covering path can be formulated as a
reachability problem in a directed graph:
\begin{lemma}\label{lem:existspath}
Let $(V,E)$ be a directed graph of the kind described above. Then, 
there is a covering path from $\sinit$ to $\sfinal$
iff 
\begin{compactenum}[(1)]
\item all vertices are reachable from $\sinit$,
\item $\sfinal$ is reachable from all vertices, and 
\item for all pairs of vertices $(v_1,v_2) \in (V\setminus\{\sinit,\sfinal\})^2$,\\
\begin{inparaenum}[(a)]
\item $v_2$ is reachable from $v_1$ or 
\item $v_1$ is reachable from $v_2$.
\end{inparaenum}
\end{compactenum}
\end{lemma}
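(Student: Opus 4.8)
The plan is to prove the two directions separately. The forward implication (a covering path exists $\Rightarrow$ (1)--(3)) is pure bookkeeping on a single walk; the reverse implication rests on one order-theoretic observation about condition~(3), which is the only non-routine ingredient.

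For ``$\Rightarrow$'', fix a covering path $\pi$ from $\sinit$ to $\sfinal$ (recall such a path may revisit vertices). Every vertex $v$ of $V$ occurs on $\pi$. The prefix of $\pi$ up to the first occurrence of $v$ is a walk $\sinit\rightsquigarrow v$, which gives~(1); the suffix from the last occurrence of $v$ to the end of $\pi$ is a walk $v\rightsquigarrow\sfinal$, which gives~(2). For~(3), given internal vertices $v_1,v_2$, choose an occurrence of each on $\pi$; the sub-walk between them (from whichever occurs no later to the other) witnesses that one is reachable from the other, and the case $v_1=v_2$ is trivial.

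For ``$\Leftarrow$'', the key point is that condition~(3) says exactly that the relation ``$v'$ is reachable from $v$'' is \emph{total} on the property vertices $\Pi$; since this relation is also reflexive (trivial path) and transitive (concatenation of paths), it is a total preorder on $\Pi$. A finite total preorder can be linearised: writing $v\sim v'$ when $v,v'$ are mutually reachable (i.e.\ lie in the same strongly connected component), the $\sim$-classes are linearly ordered by reachability, so listing $\Pi=\{u_1,\dots,u_n\}$ class by class in that order yields a sequence with $u_{j+1}$ reachable from $u_j$ for all $j<n$. By~(1) there is a walk $\sinit\rightsquigarrow u_1$ and by~(2) a walk $u_n\rightsquigarrow\sfinal$; concatenating $\sinit\rightsquigarrow u_1\rightsquigarrow u_2\rightsquigarrow\cdots\rightsquigarrow u_n\rightsquigarrow\sfinal$ produces a walk from $\sinit$ to $\sfinal$ that passes through every $u_j$ and, as its endpoints, through $\sinit$ and $\sfinal$ — hence through all of $V$. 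That is the desired covering path. (If $\Pi=\emptyset$ the middle collapses and we only need a walk $\sinit\rightsquigarrow\sfinal$, which is immediate from~(1) or~(2).)

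The only step I expect to require care in the write-up is the linearisation claim — that a total preorder on a finite set, equivalently a condensation DAG in which every two strongly connected components are comparable, admits an enumeration in which consecutive elements are related. I would isolate this as an explicit sub-claim (it amounts to the existence of a topological order that is in fact a total order) so that the subsequent walk-concatenation argument reads as a one-liner.
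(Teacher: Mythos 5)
Your proof is correct, and the forward direction is essentially the paper's (the paper dismisses (1) and (2) as obvious and derives (3) by contradiction from the non-existence of a connecting subpath; you argue all three directly via first/last occurrences, which is the same idea made explicit). The interesting divergence is in the converse. The paper proves it by a greedy insertion argument: maintain a partial path from $\sinit$, and for each new vertex $v'$ either append it (if reachable from the current endpoint), splice it in after the last vertex $v_i$ on the path from which it is reachable (using (3b)), or prepend it after $\sinit$ (using (1)); finally close off with $\sfinal$ by (2). You instead observe that condition (3) makes reachability a total preorder on $V\setminus\{\sinit,\sfinal\}$, linearise it (the condensation into strongly connected components is a finite total order, so an enumeration $u_1,\dots,u_n$ with $u_{j+1}$ reachable from $u_j$ exists), and concatenate $\sinit\rightsquigarrow u_1\rightsquigarrow\cdots\rightsquigarrow u_n\rightsquigarrow\sfinal$. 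Both are sound; your order-theoretic route is arguably cleaner and isolates the one real idea. What the paper's messier induction buys is that it is verbatim the $\mathcal{O}(n^2)$ procedure $\mathit{GetCoveringPath}$ reused later in the abstraction-refinement step (where only one vertex per property refinement group must be covered), so the constructive form is not incidental. Your argument is equally effective there --- sorting under the preorder with reachability queries on the transitive closure is also polynomial and in fact $\mathcal{O}(n\log n)$ comparisons --- but you would need to adapt the linearisation to pick one representative per refinement group. Do keep the linearisation sub-claim explicit as you propose; that is the only step a careless reader could stumble on.
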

\begin{proof}
In the transitive closure $(V,E')$ of $(V,E)$, $v_2$ is reachable
from $v_1$ iff there exists an edge $(v_1,v_2)\in E'$.

$(\Longrightarrow)$: conditions (1) and (2) are obviously necessary.
Let us assume that we have a covering path $\pi$ and there are vertices
$(v_1,v_2)$ which neither satisfy (3a) nor (3b). Then neither
$\langle v_1,\ldots,v_2\rangle$ nor $\langle v_2,\ldots,v_1\rangle$ can be a subpath of $\pi$,
which contradicts the fact that $\pi$ is a covering path.

$(\Longleftarrow)$: Any vertex is reachable from $\sinit$ (1), so let us choose $v_1$. 
From $v_1$ we can reach another vertex $v_2$ (3a), or, at least, $v_1$ is reachable
from another vertex $v_2$ (3b), but in the latter case, since $v_2$ is
reachable from $\sinit$, we can go first to $v_2$ and then to $v_1$.
Induction step: Let us assume we have a path $\langle \sinit,v_1,\ldots,v_k \rangle$.
If there is a vertex $v'$ that is reachable from $v_k$ (3a) we add it to our current
path $\pi$. If $v'$ is unreachable from $v_k$, then by (3b), $v_k$
must be reachable from $v'$, and there is a
$v_i, i<k$ in $\pi=\langle \sinit,\ldots,v_k\rangle$ from which it is reachable and
in this case we obtain the path
$\langle \sinit,\ldots,v_i,v',v_{i+1},\ldots,v_k\rangle$;
if there is no such $v_i$ then, at last by (1), $v'$ is reachable from
$\sinit$, so we can construct the path $\langle \sinit,v',\ldots,v_k \rangle$.
$\sfinal$ is reachable from any vertex (2), thus, we can complete the
covering path as soon as all other vertices have been covered. \qed
\end{proof}
Reachability can be checked in constant time on the transitive closure
of the graph.  Hence, the overall existence check has complexity
$\mathcal{O}(|V|^3)$.

\subsection{Optimisation: Shortest Path Computation}

The next step is to compute the shortest path\rronly{ (function
$\mathit{GetShortestPath}$ in Alg.~\ref{alg:minchain1})} covering all
nodes in the property K-reachability graph.
Such a path is not necessarily Hamiltonian; revisiting nodes is
allowed. However, we can compute the transitive closure of the graph
using the Floyd-Warshall algorithm\rronly{~\cite{Flo62}} (which
preserves minimality),
and then compute a Hamiltonian path from $\sinit$ to
$\sfinal$.
If we do not have a Hamiltonian path solver, we can add an edge from
$\sfinal$ to $\sinit$ and pass the problem to an \emph{asymmetric
  travelling salesman problem} (ATSP) solver \rronly{(referred to as
$\mathit{SolveATSP}$ in the sequel) }that gives us the shortest
circuit that visits all vertices exactly once.
We cut this circuit between $\sfinal$ and $\sinit$ to obtain the
shortest path $\pi$.

\rronly{
\begin{lemma}[Minimum covering path]\label{lem:mincoverpath}
  Let $(V,E',W')$ be the transitive closure of a
  weighted directed graph $(V,E,W)$, and $\sinit, \sfinal \in V$. Then,
  $\mathit{SolveATSP}$ $(V,E'\cup\{(\sfinal,\sinit)\},W'\cup\{(\sfinal,\sinit)\mapsto 1\})$ returns a permutation $\pi=\langle v_0,\ldots,v_{|V|-1}\rangle$ of vertices
  $V$ such that
  $\langle v_i=\sinit,\ldots,v_{|V|-1},v_0,\ldots,v_{i-1}=\sfinal \rangle$ is a
  minimum covering path from $\sinit$ to $\sfinal$.
\end{lemma}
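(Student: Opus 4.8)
The plan is to reduce the claim to the correctness of the ATSP solver together with the fact that in the transitive closure every edge corresponds to a genuine (shortest) path in the original graph. First I would observe that any covering path from $\sinit$ to $\sfinal$ in $(V,E,W)$ can, by collapsing maximal detours between consecutive ``new'' vertices, be turned into a path that visits every vertex exactly once in the transitive closure $(V,E',W')$ without increasing its total weight; this uses that $W'$ assigns to each edge of $E'$ the minimal weight of a path realising it in $(V,E,W)$, and that Floyd--Warshall computes exactly these minima. Conversely, every Hamiltonian path in $(V,E',W')$ expands back to a covering path in $(V,E,W)$ of the same total weight. Hence the minimum covering path weight in $(V,E,W)$ equals the minimum weight of a Hamiltonian path from $\sinit$ to $\sfinal$ in $(V,E',W')$.

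Next I would handle the reduction from Hamiltonian path to ATSP. Adding the edge $(\sfinal,\sinit)$ with weight $1$ turns any Hamiltonian $\sinit$-to-$\sfinal$ path into a Hamiltonian circuit of weight (path weight)$+1$, and conversely any Hamiltonian circuit in $(V,E'\cup\{(\sfinal,\sinit)\},\ldots)$ must, since $(\sfinal,\sinit)$ is the only incoming edge of $\sinit$ available that is ``cheap'', actually — more carefully — one argues that an optimal circuit traverses $(\sfinal,\sinit)$: if it did not, its restriction would still have to enter $\sinit$ and leave $\sfinal$, and one shows the weight can only go down by rerouting through $(\sfinal,\sinit)$, using that all other weights are positive integers. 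Cutting the optimal circuit at $(\sfinal,\sinit)$ therefore yields a minimum-weight Hamiltonian path from $\sinit$ to $\sfinal$, which by the first paragraph is a minimum covering path.

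The main obstacle is making the ``collapsing detours'' argument fully rigorous: a covering path may revisit vertices arbitrarily often, and one must show that replacing, for each vertex, everything between its first occurrence and the first occurrence of the next not-yet-visited vertex by the single transitive-closure edge both (i) yields a well-defined permutation of $V$ and (ii) does not increase total weight. Point (ii) is where the definition of $W'$ as the shortest-path metric is essential — the weight of the collapsed edge is at most the weight of the subpath it replaces — and point (i) needs the observation that along a covering path the vertices, listed by first occurrence, form a permutation of $V$ starting at $\sinit$ and ending at $\sfinal$. I would also need to note the mild subtlety that the extra weight-$1$ edge $(\sfinal,\sinit)$ is a constant offset shared by all circuits, so minimising the circuit is equivalent to minimising the path.

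I expect this is essentially the only delicate part; once it is in place, the rest is bookkeeping: the ATSP solver returns an optimal permutation $\pi = \langle v_0,\ldots,v_{|V|-1}\rangle$, we locate the indices with $v_i = \sinit$ and $v_{i-1} = \sfinal$ (consecutive modulo $|V|$ because the circuit uses the edge $(\sfinal,\sinit)$), and read off $\langle v_i,\ldots,v_{|V|-1},v_0,\ldots,v_{i-1}\rangle$ as the claimed minimum covering path.
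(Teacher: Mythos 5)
Your proof follows essentially the same route as the paper's: collapse revisits to turn a minimum covering path into a Hamiltonian path of equal weight in the transitive closure (using that $W'$ is the shortest-path metric, which ``preserves optimality''), force the optimal circuit through the added edge $(\sfinal,\sinit)$, and cut the circuit there. The one place you diverge is the justification that the optimal circuit traverses $(\sfinal,\sinit)$: your exchange/rerouting argument (``the weight can only go down by rerouting through $(\sfinal,\sinit)$'') is not obviously sound, since replacing the circuit edges $(u,\sinit)$ and $(\sfinal,w)$ by $(\sfinal,\sinit)$ and $(u,w)$ may split the tour into two disjoint cycles or require an edge absent from $E'$; the paper instead relies on the structural fact that in the property $K$-reachability graph $\sinit$ has no incoming edges, so the added edge is the \emph{only} edge by which a Hamiltonian circuit can enter $\sinit$, and hence must be used. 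With that substitution your argument coincides with the paper's, just spelled out in more detail on the detour-collapsing step.
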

\begin{proof}
  $(V,E,W)$ has a covering path
  $\langle\ldots,v,v',v,v'',\ldots\rangle$ that is non-Hamiltonian, then
  $(V,E',W')$ 
has a Hamiltonian path $\langle\ldots,v,v',v'',\ldots\rangle$
  because $v''$ is reachable from $v'$.
  
  Any Hamiltonian circuit $\langle v_0,\ldots,v_{|V|-1}\rangle$ returned
  by $\mathit{SolveATSP}$ must contain the edge $(v_i,v_{(i+1)\mod
    |V|})=(\sfinal,\sinit)$ because $(\sfinal,\sinit)$ is
  the only (and hence the cheapest) edge for reaching $\sinit$ from $\sfinal$.

  The obtained path has minimum length because the transitive closure
  preserves optimality ($W(v_1,v_2)+W(v_2,v_3)=W(v_1,v_3)$).\qed
 \end{proof}
}
For our example, the shortest path has length 9, given as
bold edges in Fig.~\ref{fig:ex-chain1}.

\subsection{Concretisation: Computing the Test Case Chain}\label{sec:concr}

Once we have found a minimum covering path $\pi$ in the property $K$-reachability
graph abstraction, we have to compute the inputs corresponding to it
 in the concrete program. This is done by the function 
$\chpath(\pi,\trans,W)$ which takes an abstract path 
$\pi=\langle \assume_1,\ldots,\assume_{|V|}\rangle$ and returns the input sequence
$\langle i_0,\ldots,i_n\rangle$ corresponding to a concrete path
with the reachability distances between each
$(\assume_j,\assume_{j+1})\in\pi$ 
given by the edge weights $W(\assume_j,\assume_{j+1})$.
Typically, $\chpath$ involves constraint solving;
we will discuss our implementation in \S\ref{sec:inst}.
\rronly{
Hence, $\mathit{GetChain}$ in Alg.~\ref{alg:minchain1}
corresponds to a call to $\chpath(\pi,\trans,W)$ and returning the obtained
input sequence.
}

For our example, we obtain, for instance, the sequence 
$\langle$\textit{gas, acc, button, dec, dec, gas, dec, brake, button}$\rangle$ 
corresponding to the bold edges in Fig.~\ref{fig:ex-cruise}.

\subsection{Optimality}
Since the (non-)existence or the optimality of a chain in the
$K$-reachability abstraction does not imply the (non-)existence or the
optimality of a chain in the concrete program, the success of this
procedure can only be guaranteed under certain conditions, which we
now discuss.

\begin{lemma}[Single-state property triggers]\label{lem:opt1}
  If (1) the program and the properties admit a test case chain, (2) all
  triggers \rronly{$\widehat{\assume}$ }of properties in $P$ are singleton sets,
  and (3) the test case chain \rronly{$\chain$ }computed by
  \rronly{Alg.~\ref{alg:minchain1}}\pponly{our algorithm}
  visits each property once, then the test case chain is minimal.
\end{lemma}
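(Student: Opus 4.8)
The plan is to show that the abstract optimisation in the property $K$-reachability graph faithfully captures the concrete minimisation problem under the three hypotheses. First I would argue that every concrete test case chain induces a covering path in the $K$-reachability graph: given a concrete chain $\pi=\langle s_0,\ldots,s_{n+1}\rangle$ with $s_0\in\sinit$ and $s_{n+1}\in\sfinal$ that covers all properties, each property trigger is a single state $\widehat{\assume_j}$ by hypothesis (2), so the chain passes through a uniquely determined sequence of states $\sinit\ni s_0$, then the triggers in the order in which the chain encounters them, then $s_{n+1}\in\sfinal$. Because assumptions are non-overlapping (stated in the preliminaries) and by hypothesis (3) each trigger is visited exactly once, this yields a genuine permutation of $V$, hence a Hamiltonian (and a fortiori covering) path $\hat\pi$ in the transitive closure of $G$. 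Moreover the length of $\pi$ is exactly the sum over consecutive pairs $(\assume_j,\assume_{j+1})$ of $\hat\pi$ of the minimal step distance between the two singleton states, which — provided $K$ is large enough that each such distance is $\le K$ — equals $\sum W(\assume_j,\assume_{j+1})$, the weight of $\hat\pi$ in the abstraction.

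Next I would argue the converse direction, that every covering path in $G$ of weight $w$ can be concretised to a test case chain of length exactly $w$. This is where the singleton-trigger hypothesis does the essential work: since each node $\assume_j$ corresponds to a single concrete state, the concrete segments witnessing the edge weights $W(\assume_j,\assume_{j+1})$ all start and end at fixed states, so they compose without interference into one execution from the (a) fixed start state to the (a) fixed final state. Formally, $\chpath$ returns such a sequence, and its length is $\sum W(\assume_j,\assume_{j+1}) = w$ by construction of $W$. Combining the two directions: the minimum weight of a covering path in $G$ equals the minimum length of a concrete test case chain, and Lemma~\ref{lem:mincoverpath} guarantees that $\mathit{GetShortestPath}$ returns a minimum covering path; concretising it via $\chpath$ therefore yields a chain of minimum length. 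Hypothesis (1) ensures a chain exists at all, so the algorithm does not abort, and by the forward direction the graph does contain enough edges (with the right weights, since the minimal distances are realised) for $\mathit{ExistsPath}$ to succeed before the bound is exhausted.

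The main obstacle I anticipate is pinning down precisely why the composition of the per-edge concrete witnesses is legitimate — i.e.\ that gluing a shortest path from the state $\widehat{\assume_j}$ to the state $\widehat{\assume_{j+1}}$ onto one from $\widehat{\assume_{j+1}}$ to $\widehat{\assume_{j+2}}$ really produces a valid execution with no double-counting of the visited triggers. This is exactly the point where the singleton assumption (2) and the visit-each-property-once assumption (3) are indispensable: if a trigger were a set rather than a single state, the shortest route reaching \emph{some} state of $\widehat{\assume_{j+1}}$ from $\widehat{\assume_j}$ need not leave from the same state that minimises the distance onward to $\widehat{\assume_{j+2}}$, so the abstract weights would merely lower-bound, not equal, the concrete cost, and minimality would fail. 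I would also need to note, more as a bookkeeping remark than a real difficulty, that an intermediate segment might incidentally pass through another trigger, but non-overlapping assumptions together with hypothesis (3) rule this out for the optimal path, so the abstract and concrete lengths coincide exactly rather than up to a correction. With that composition lemma in hand, the equality of the two optima — and hence minimality of the computed chain — follows immediately.
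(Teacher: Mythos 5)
Your overall skeleton---singleton triggers make the abstract edge weights coincide with concrete shortest distances, the per-edge witnesses glue at fixed states, and the TSP then picks the optimal ordering---is essentially the same argument the paper gives, just spelled out in both directions; the composition point you isolate for hypothesis (2) is indeed the heart of it.

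There is, however, one genuine gap. Your pivotal step, ``the minimum weight of a covering path in the abstraction equals the minimum length of a concrete test case chain,'' tacitly assumes that the property $K$-reachability graph handed to the optimiser is \emph{complete}, i.e.\ that every pair $(\assume,\assume')$ is present with its true minimal weight. But Alg.~\ref{alg:buildgraph1} adds edges in order of increasing weight $k$ and \emph{stops as soon as some covering path exists}, so the graph may be missing any edge whose minimal distance exceeds the $k$ at which the loop terminated, and the abstract optimum over the partial graph can strictly exceed the concrete optimum. This is exactly the role the paper assigns to hypothesis (3), and the one your proposal misses: if the computed chain revisits a property, producing a subpath $(\assume,\assume',\assume,\assume'')$, there might be a direct edge $(\assume',\assume'')$ with $W(\assume',\assume'')<W(\assume',\assume)+W(\assume,\assume'')$ that would only be discovered at a larger $k$; condition (3) certifies that the returned path relies on no such composite routes. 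You instead spend (3) on bookkeeping (making the induced abstract path a permutation and avoiding double counting of triggers), which is not where the condition earns its keep. A smaller slip in your forward direction: the segments of an arbitrary concrete chain have length at least, not exactly, the minimal pairwise distances; you only need the inequality, which points the right way for the lower bound, but the claimed equality is false for non-optimal chains.
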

\rronly{\begin{proof}}
If each property is visited once, it is guaranteed that the abstract
path contains only edges that correspond to concrete paths of minimal
length, and hence the test case chain \rronly{$\chain$}is optimal for the concrete program.
Otherwise, for a subpath $(\assume,\assume',\assume,\assume'')$, there might exist an edge
$(\assume',\assume'')$ with $W(\assume',\assume'')<W(\assume',\assume)+W(\assume,\assume'')$ that
is only discovered for higher values of $K$.\rronly{ \qed \end{proof}
} %
For finite state systems, there is an upper bound for $K$, the reachability
diameter\pponly{, \ie,~the maximum (finite) length $d$ of a path in the set of shortest
paths between any pair of states $s_i,s_j \in\Sigma$}
\pponly{\cite{KS03}}\rronly{\cite{BAS02,KS03} }\pponly{. Beyond $d$}\rronly{beyond that} 
we will not discover shorter pairwise links.
\rronly{
\begin{definition}[Reachability diameter]
The reachability diameter $d$ of a system $(\Sigma,\Upsilon,
\trans,\init)$ is the maximum (finite) length of a path in the set of shortest
paths between any pair of states $s_i,s_j \in\Sigma$.
\end{definition}
}
\begin{theorem}[Minimal test case chain]\label{thm:opt2}
  Let $d$ be the reachability diameter of the program, then there is a
  $K\leq d$ such that, under the preconditions (1) and (2) of
  Lem.~\ref{lem:opt1}, the test case chain \rronly{$\chain$ }\rronly{computed by
  Alg.~\ref{alg:minchain1} }is minimal.
\end{theorem}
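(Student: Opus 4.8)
The plan is to take $K=d$ and to argue that at this bound the property $K$-reachability abstraction becomes \emph{exact}, after which the optimisation step transports minimality back and forth between the abstraction and the concrete program. First I would show that, built with bound $K=d$, the graph contains for every pair $(\assume,\assume')\in E_{target}$ whose triggers are connected in the concrete program a direct edge whose weight equals the true minimal number of concrete steps from $\widehat{\assume}$ to $\widehat{\assume'}$: by definition of the reachability diameter this distance is finite and at most $d=K$, so $\getkedges$ produces the edge at the iteration with $k$ equal to that distance, and it records the correct weight because edges are added in order of increasing $k$. (The only mild point is that $\sinit$ and $\sfinal$ denote sets; then $W(\sinit,\assume')$ is the distance from the best initial state and $W(\assume,\sfinal)$ the distance to the best final state, which is exactly what a chain may exploit, and the rest of the argument is unchanged.) Hence $W$ agrees with the genuine concrete pairwise distance on every edge, and since the Floyd--Warshall closure only forms minima of sums of these weights, the transitive closure $(V,E',W')$ used in the optimisation carries the genuine concrete distances as well. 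Precondition~(1) then ensures that a covering path exists in this graph---any concrete chain visits all triggers and thus induces a covering walk through the corresponding vertices---so Alg.~\ref{alg:buildgraph1} does not abort.

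Next I would invoke Lem.~\ref{lem:mincoverpath}: the optimisation returns a minimum covering path $\pi$, Hamiltonian on $(V,E',W')$, whose total weight is the least weight of any covering path there. It remains to match this weight with the length of a minimal concrete test case chain, which I would do by two inequalities. For the \emph{upper} bound, $\chpath(\pi,\trans,W)$ returns a concrete chain that realises each hop of $\pi$ by a shortest concrete segment; because every trigger is a singleton (precondition~(2)), the last state of the $j$-th segment is forced to be that singleton and therefore coincides with the first state of the following segment, so the concatenation is a genuine execution---it starts in $\sinit$, ends in $\sfinal$, and passes through every trigger, hence (as the assumptions are non-overlapping state formulas, so $\assume_j$ is equivalent to $\widehat{\assume_j}$) covers every property---and its length is exactly the weight of $\pi$. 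For the \emph{lower} bound, I would take an arbitrary test case chain $c$; since $c$ covers all properties it visits every trigger state, and recording the (possibly repeated) sequence of trigger visits, preceded by $\sinit$ and followed by $\sfinal$, yields a covering walk in $(V,E',W')$ each of whose edge weights lower-bounds the length of the corresponding sub-segment of $c$---this is precisely where exactness of the abstraction is needed, since the weights are true distances rather than over-approximations. A covering walk in $(V,E',W')$ can be short-cut to a covering path of no greater weight (Lem.~\ref{lem:mincoverpath}), so the length of $c$ is at least the weight of $\pi$. Combining the two inequalities, the chain returned by the algorithm is a test case chain of minimal length.

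I expect the main obstacle to be the lower-bound direction together with the exactness claim it rests on. One must handle a concrete chain that \emph{revisits} a singleton trigger---which is genuinely sometimes unavoidable even under preconditions~(1) and~(2)---by observing that such a detour is accounted for by a transitive-closure edge of the right weight; this is exactly the reason $K$ has to be raised to $d$ rather than stopped at the first bound admitting a covering path, and the reason precondition~(3) of Lem.~\ref{lem:opt1} is not required here. A secondary subtlety is the slight mismatch between ``construct all edges of weight up to $K$'' and the early-terminating loop of Alg.~\ref{alg:buildgraph1}: I would either show that the high-weight edges still missing after early termination cannot shortcut an optimal covering path once $K\ge d$, or read the statement as asserting the existence of a sufficient bound for the full construction.
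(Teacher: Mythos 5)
Your proof takes essentially the same route as the paper's, whose entire argument is one sentence: for $K=d$ every edge of the abstraction carries the true minimal concrete distance, so the abstract shortest covering path concretises to an optimal chain even when properties are revisited. You supply all the details the paper omits---the two-inequality matching via the transitive closure, the role of singleton triggers in making the concatenation a genuine execution, and the early-termination caveat about Alg.~\ref{alg:buildgraph1}---but the underlying idea is identical.
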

\rronly{
\begin{proof}
  For $K=d$, it is guaranteed that the abstract path contains only
  edges of minimal length, and hence the chain is optimal w.r.t~the
  concrete program (even if properties are revisited).
\end{proof}
}

\noindent In practice, we can stop the procedure if a chain of
acceptable length is found, \ie~we do not compute the reachability
diameter but use a user-supplied bound.

\section{Generalisations}\label{sec:drop}

We will now generalise our algorithm in three ways:
\begin{compactitem}
\item \emph{Multi-state property triggers:} Dropping the assumption
  that triggers are single-state may make the concretisation
  phase fail. Under certain restrictions, we will still 
  find a test case chain if one exists, but we lose minimality.
\item Without these restrictions, we might even lose
  completeness, \ie,~the guarantee to find a chain if one exists.
\pponly{We propose an abstraction refinement
 to \emph{ensure completeness} under these circumstances.}
\rronly{We propose two methods to \emph{ensure completeness}
  under these circumstances: (1) an abstraction refinement that
  can be used with any ATSP solver, and (2) a method based on 
  restricting the optimisation problem using path constraints that 
  requires a more general solver, \eg~an Answer Set Programming (ASP) solver.}
\item \emph{Multiple chains:} Dropping the assumption about the
  existence of a single chain raises the problem of how to generate
  multiple chains. 
\end{compactitem}
\pponly{We discuss here the first two points and refer to the extended
version~\cite{SMK13b} of this paper for the third one.}

\subsection{Multi-State Property Triggers}

In practice, many properties are multi-state, \ie~preconditions (2) of
Lem.~\ref{lem:opt1} is not met.  In this case, the abstract covering
path might be infeasible in the concrete program, and hence, the naive
concretisation of \S\ref{sec:concr} might fail. We have to extend the
concretisation step to fix such broken chains.

\begin{example}[Broken chain]
Let us consider the following broken chain in our example with the properties:\\
\begin{tabular}{@{\hspace{2em}}l@{\hspace{0.5em}}l}
$p_1:$ & $\mathbf{G}\big(mode=\mathit{OFF} \wedge \neg enable \wedge button \Rightarrow \mathbf{X}~enable\big)$ \\
$p_2:$ & $\mathbf{G}\big(mode=\mathit{ON} \wedge brake \Rightarrow \mathbf{X}(mode=\mathit{DIS})\big)$ 
\end{tabular}~\\
with $\sinit = \sfinal = \{mode=OFF \wedge speed=0 \wedge \neg enable\}$.

We obtain a shortest covering path $\langle
\sinit,\assume_1,\assume_2,\sfinal \rangle$ in the abstraction with weights
$W(\sinit,\assume_1)=0$, $W(\assume_1,\assume_2)=1$, and $W(\assume_2,\sfinal)=2$.
However, Fig.~\ref{fig:ex-cruise} tells us that the path
$\langle \sinit,\assume_1,\assume_2 \rangle$ is not feasible in a
single step, but requires two steps, as illustrated in
Fig.~\ref{fig:brokenchain1}.
\end{example}

\begin{figure}[t]
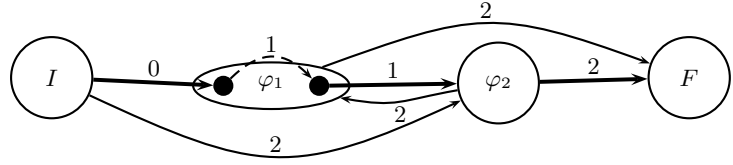

\centering
\vspace*{1ex}
\psset{arrows=->,labelsep=1pt}
\begin{tabular}{c@{\hspace{4em}}c@{\hspace{4em}}c@{\hspace{4em}}c}
\circlenode{i}{\makebox[2.5em]{$\sinit$}} & 
\ovalnode{p2}{\makebox[4em]{\circlenode[fillcolor=black,fillstyle=solid]{p2a}{}\hspace{1em}$\assume_1$\hspace{1em}\circlenode[fillcolor=black,fillstyle=solid]{p2b}{}}}
&
\circlenode{p1}{\makebox[2.5em]{$\assume_2$}} &
\circlenode{f}{\makebox[2.5em]{$\sfinal$}} \\[5ex]
\end{tabular}
\ncline[linewidth=1.5pt]{i}{p2a}\naput{$0$}
\ncline[linewidth=1.5pt]{p2b}{p1}\naput{$1$}
\nccurve[linestyle=dashed,angleA=45,angleB=135,ncurv=1]{p2a}{p2b}\naput{$1$}
\nccurve[angleA=-170,angleB=-10,ncurv=1]{p1}{p2}\naput{$2$}
\nccurve[angleA=-25,angleB=-155,ncurv=1]{i}{p1}\naput{$2$}
\nccurve[angleA=20,angleB=160,ncurv=1]{p2}{f}\naput{$2$}
\ncline[linewidth=1.5pt]{p1}{f}\naput{$2$}
\caption{\label{fig:brokenchain1} 
Broken chain: the path $\langle \sinit,\assume_1,\assume_2 \rangle$ is not
 feasible in a single step, but requires two steps.
}
\end{figure}

A broken chain contains an infeasible subpath $\failedpath=\langle
\assume_1,\ldots,\assume_k\rangle$ of the abstract path $\pi$ that involves
at least three vertices, such as $\langle \sinit,\assume_1,\assume_2
\rangle$ in our example above.
We extend the concretisation step \rronly{($\mathit{GetChain}$)} with
a chain repair capability.  The function $\mathit{RepairPath}$\rronly{
  as shown in Alg.~\ref{alg:getchain2}} iteratively repairs broken
chains by incrementing the weights associated with the edges of
$\failedpath$ and checking feasibility of this ``stretched'' path.
We give more details about our implementation in \S\ref{sec:inst}.

\rronly{
\begin{algorithm}[h]
\KwIn{weighted, directed graph $(V,E,W)$, path
  $\pi$, transition relation $\trans$}
\KwOut{test case chain $\chain=\langle i_0,\ldots,i_N\rangle$}
$(\mathit{feasible},\chain,\failedpath) \gets \chpath(\pi,\trans,W)$\\
\lIf{$\mathit{feasible}$}{\Return \rronly{$\chain$}}\\
\Else{
  $(\mathit{succeeded},W,\_) \gets
  \mathit{RepairPath}(\failedpath,\trans,W)$\\
  \lIf{$\neg\mathit{succeeded}$}{ \KwAbort ``no chain found for given bound $K$''}\\
  $(\_,\chain,\_) \gets \chpath(\pi,\trans,W)$\\
  \Return \rronly{$\chain$}
}
\caption{\label{alg:getchain2} 
$\mathit{GetChain}$ with chain repair}
\end{algorithm}
}

\begin{example}[Repaired chain]
For the broken chain in our previous example, we will 
check whether $\langle \sinit,\assume_1,\assume_2\rangle$ is feasible 
with $W(\assume_1,\assume_2)$ incremented by one.
This makes the path feasible and we obtain the chain 
$\rronly{\chain=}\langle$\textit{button,gas,brake,button}$\rangle$.
\end{example}

\bigskip\paragraph{Completeness}
The chain repair succeeds
if the given path $\pi$ admits a chain in the concrete program.
In particular, this holds when the states in each property
trigger are strongly connected:
\begin{theorem}[Multi-state strongly connected property]\label{thm:minchain2}
  If for each property trigger \rronly{$\widehat{\assume}$ }the states are strongly
  connected and there exists a test case chain then \rronly{Alg.~\ref{alg:minchain1}
  (with Alg.~\ref{alg:getchain2})}\pponly{our algorithm with chain
  repair} will find it.
\end{theorem}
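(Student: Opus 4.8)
The plan is to split completeness into two stages: (i) the abstraction/optimisation phase of Alg.~\ref{alg:minchain1} produces an abstract covering path $\pi$ at all, and (ii) the concretisation-with-repair of \S\ref{sec:concr}/Alg.~\ref{alg:getchain2} necessarily turns $\pi$ into a concrete chain. For (i), fix a test case chain $\sigma=\langle s_0,\ldots,s_{n+1}\rangle$, which exists by assumption. Reading off the order in which $\sigma$ triggers the (non-overlapping) property assumptions gives an abstract covering path, each of whose hops between consecutively triggered triggers has length at most $n$; hence the \emph{minimal} distance between any such pair of triggers is at most the reachability diameter $d$, so provided $K\ge d$ (as in Thm.~\ref{thm:opt2}, the regime in which graph construction does not abort) every edge needed for a covering path is present, $\mathit{ExistsPath}$ succeeds, and by Lem.~\ref{lem:mincoverpath} the solver returns a Hamiltonian covering path $\pi=\langle\sinit,\assume_{\rho(1)},\ldots,\assume_{\rho(|P|)},\sfinal\rangle$ on the transitive closure. (For a smaller user-supplied bound the algorithm may abort, exactly as for Thm.~\ref{thm:opt2}; this is the usual caveat, not a defect of the repair step.)

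Stage (ii) is the heart of the argument, and the crucial point is that the permutation $\rho$ chosen by the ATSP solver need not be the order realised by $\sigma$, yet is still globally concretisable. Since $\pi$ lives in the transitive closure, each of its edges is witnessed by a concrete finite path between two \emph{particular} states of its endpoint triggers: write $t_i\in\widehat{\assume_{\rho(i)}}$ for the state at which the previous segment arrives and $u_i\in\widehat{\assume_{\rho(i)}}$ for the state from which the next hop departs towards $\widehat{\assume_{\rho(i+1)}}$ (with $t_1$ reached from $\sinit$, and $\widehat{\assume_{\rho(|P|)}}$ leaving $u_{|P|}$ towards $\sfinal$). These witnesses do not fit together because in general $t_i\neq u_i$. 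Here the hypothesis enters: strong connectivity of $\widehat{\assume_{\rho(i)}}$ supplies a concrete path from $t_i$ to $u_i$. Concatenating $\sinit\leadsto t_1\leadsto u_1\leadsto t_2\leadsto u_2\leadsto\cdots\leadsto t_{|P|}\leadsto u_{|P|}\leadsto\sfinal$ yields one connected concrete execution that starts in $\sinit$, ends in $\sfinal$, and passes through $\widehat{\assume_{\rho(1)}},\ldots,\widehat{\assume_{\rho(|P|)}}$ in order --- i.e.\ a test case chain realising $\pi$, with each edge weight increased by the length of the inserted bridge. In particular, a finite feasible weight assignment for $\pi$ exists.

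It remains to connect this to Alg.~\ref{alg:getchain2}: $\chpath(\pi,\trans,W)$ first tries the original weights; on failure it reports an infeasible $\failedpath$, $\mathit{RepairPath}$ increments the weights along $\failedpath$ and re-checks, iterating over successive failed subpaths. Since the previous paragraph exhibits a finite feasible weight assignment for all of $\pi$, this monotone search is bounded and terminates with feasibility, after which $\chpath$ returns the chain; hence the algorithm finds a test case chain whenever one exists under the hypothesis. The main obstacle I anticipate is exactly this independence from the visiting order --- making precise that the solver's permutation $\rho$, possibly distinct from every order realised by an actual chain, is nonetheless concretisable; strong connectivity of the triggers is precisely what buys this. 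Two smaller points need care: that inserting the bridge paths cannot violate the required visiting order (it cannot, since covering a property more than once is permitted, so a bridge straying through another trigger is harmless), and that $\mathit{RepairPath}$'s policy of stretching only the currently reported $\failedpath$ still converges to the global assignment constructed above.
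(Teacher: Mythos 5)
The paper states Theorem~\ref{thm:minchain2} without proof, so there is nothing to compare against line by line; judged on its own, your decomposition is the natural one and the core idea is right. Stage (i) is fine: an existing concrete chain witnesses every edge needed for a covering path in the abstraction once $k$ reaches the relevant distances, so $\mathit{ExistsPath}$ eventually succeeds (modulo the user bound $K$, the same caveat as in Thm.~\ref{thm:opt2}). Stage (ii) correctly identifies where the hypothesis is used: an abstract edge $(\assume,\assume')$ only certifies reachability from \emph{some} state of $\widehat{\assume}$ to \emph{some} state of $\widehat{\assume'}$, so the witnesses of consecutive edges need not meet, and strong connectivity of each trigger supplies the bridge $t_i\leadsto u_i$ that makes an arbitrary solver-chosen permutation $\rho$ concretisable. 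That is exactly the content of the theorem.

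The one step you flag but do not actually close is the last one, and your stated justification for it is not quite the right argument. You infer that $\mathit{RepairPath}$ terminates with feasibility ``since a finite feasible weight assignment for all of $\pi$ exists.'' But $\mathit{RepairPath}$ (Alg.~\ref{alg:repairpath1}) does not search over global weight assignments: it fixes the edges of $\failedpath$ greedily, one at a time, restarting each check from whatever concrete state the SAT solver happened to reach at the end of the previous edge, and it gives up once a weight exceeds $K$. The existence of one global assignment built from one particular choice of witnesses $t_i,u_i$ does not by itself guarantee that this greedy, state-committed search succeeds, because the solver may land in a different state of $\widehat{\assume_{\rho(i)}}$ than your $t_i$. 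The repair is saved by the same hypothesis, applied pointwise: since the abstract edge $(\assume_{\rho(i)},\assume_{\rho(i+1)})$ certifies that some $u_i\in\widehat{\assume_{\rho(i)}}$ reaches $\widehat{\assume_{\rho(i+1)}}$, and every state of $\widehat{\assume_{\rho(i)}}$ reaches $u_i$ by strong connectivity, \emph{every} state of $\widehat{\assume_{\rho(i)}}$ reaches $\widehat{\assume_{\rho(i+1)}}$, with shortest distance at most the reachability diameter $d$; hence incrementing the weight one step at a time hits a feasible length before exceeding $K$ whenever $K\ge d$. Restated this way, the argument is independent of which satisfying assignment the solver returns at each stage, which is what the greedy repair actually requires. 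With that substitution your proof is complete.
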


In practice, many reactive systems are, apart from an initialisation
phase, strongly connected---but, as stressed above, the test case chain might not be minimal.

\subsection{Ensuring Completeness}\label{sec:compl}

If the shortest path \rronly{$\pi$ }in the abstraction does not admit a chain in
the concrete program, \pponly{our algorithm}\rronly{Alg.~\ref{alg:minchain1}} with chain repair \rronly{(Alg.~\ref{alg:getchain2}) }will fail to find a test case chain even though
one exists, \ie,~it is not complete.

\begin{example}[Chain repair fails]
  In Fig.~\ref{fig:brokenchain1}, we have found the shortest abstract
  path $\langle \sinit,\assume_1,\assume_2,\sfinal\rangle$.  Now
  assume that the right state in $\assume_1$ is not reachable from the
  left state. Then the chain repair fails. In this case, there might still be a
  (non-)minimal path in the abstraction that admits a chain: in our
  example in Fig.~\ref{fig:brokenchain1}, assuming that the left state
  in $\assume_1$ is reachable from $\sinit$ via $\assume_2$ and
  $\sfinal$ is reachable from the left state in $\assume_1$, we have
  the feasible path $\langle
  \sinit,\assume_2,\assume_1,\sfinal\rangle$.
\end{example}

\begin{figure}[t]
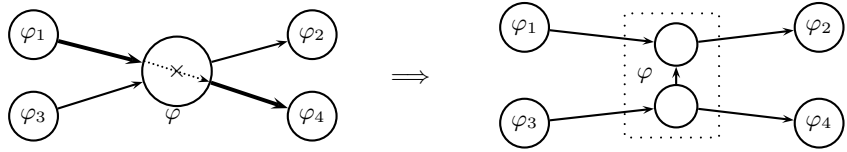

\centering
\vspace*{-1ex}
\oldfootnotesize
\psset{arrows=->}
\parbox{0.35\textwidth}{
\begin{tabular}{c@{\hspace{3em}}c@{\hspace{3em}}c}
 \circlenode{i1}{$\assume_1$} &&  \circlenode{o1}{$\assume_2$} \\[-2ex]
& \pnode{ni1}\hspace*{2.6em} \\[-3.3ex]
& \circlenode{n}{\makebox[2em]{$\times$}} & \\[-4.5ex]
& \hspace*{3.0em}\pnode{no2} \\[0ex]
 \circlenode{i2}{$\assume_3$} &$\assume$&  \circlenode{o2}{$\assume_4$}
\end{tabular}
\ncline[linewidth=1.5pt]{i1}{n}\ncline{i2}{n}\ncline{n}{o1}\ncline[linewidth=1.5pt]{n}{o2}
\ncline[linestyle=dotted,dotsep=1pt]{ni1}{no2}
}
\hspace{2em}
$\Longrightarrow$
\hspace{2em}
\parbox{0.35\textwidth}{
\begin{tabular}{c@{\hspace{3em}}c@{\hspace{3em}}c}
 \circlenode{i1}{$\assume_1$} &&  \circlenode{o1}{$\assume_2$} \\[-4ex]
& \psframebox[linestyle=dotted]{
$\begin{array}{c}$\circlenode{n1}{\makebox[1em]{}}$\\[3ex]$\circlenode{n2}{\makebox[1em]{}}$\end{array}$
} \\[-4ex]
 \circlenode{i2}{$\assume_3$} &&  \circlenode{o2}{$\assume_4$}
\end{tabular}
\ncline{i1}{n1}\ncline{i2}{n2}\ncline{n1}{o1}\ncline{n2}{o2}\ncline{n2}{n1}\naput[labelsep=8pt]{$\assume$}
}
\caption{\label{fig:refine1} Abstraction refinement
  for a failed path $\langle \assume_1,\assume,\assume_4 \rangle$ (bold arrows).
}
\end{figure}

\begin{figure}[t]
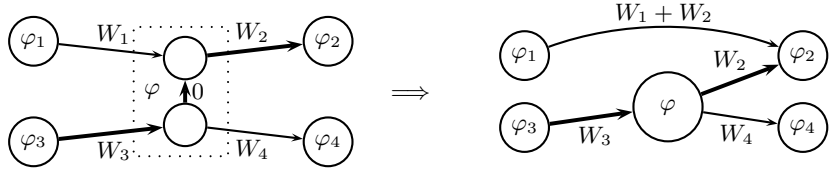

\centering
\oldfootnotesize
\psset{arrows=->,labelsep=1pt}
\parbox{0.35\textwidth}{
\begin{tabular}{c@{\hspace{3em}}c@{\hspace{3em}}c}
 \circlenode{i1}{$\assume_1$} &&  \circlenode{o1}{$\assume_2$} \\[-4ex]
& \psframebox[linestyle=dotted]{
$\begin{array}{c}$\circlenode{n1}{\makebox[1em]{}}$\\[3.5ex]$\circlenode{n2}{\makebox[1em]{}}$\end{array}$
} \\[-4ex]
 \circlenode{i2}{$\assume_3$} &&  \circlenode{o2}{$\assume_4$}
\end{tabular}
\ncline{i1}{n1}\naput{$W_1$}
\ncline[linewidth=1.5pt]{i2}{n2}\nbput{$W_3$}
\ncline[linewidth=1.5pt]{n1}{o1}\naput{$W_2$}
\ncline{n2}{o2}\nbput{$W_4$}
\ncline[linewidth=1.5pt]{n2}{n1}\naput[labelsep=8pt]{$\assume$}\nbput{$0$}
}
\hspace{2em}
$\Longrightarrow$
\hspace{2em}
\parbox{0.35\textwidth}{
\begin{tabular}{c@{\hspace{3em}}c@{\hspace{3em}}c}
 \circlenode{i1}{$\assume_1$} &&  \circlenode{o1}{$\assume_2$} \\[-1ex]
&\circlenode{n}{\makebox[2em]{$\assume$}} \\[-4ex]
 \circlenode{i2}{$\assume_3$} &&  \circlenode{o2}{$\assume_4$}
\end{tabular}
\nccurve[angleA=20,angleB=160]{i1}{o1}\naput{$W_1+W_2$}
\ncline[linewidth=1.5pt]{i2}{n}\nbput{$W_3$}
\ncline[linewidth=1.5pt]{n}{o1}\naput{$W_2$}
\ncline{n}{o2}\nbput{$W_4$}
}
\caption{\label{fig:refine2} Collapsing the property refinement group
  (box) in the refined abstraction
  to a TSP problem w.r.t. a solution path (bold arrows).
}
\end{figure}

\rronly{
\begin{algorithm}[h]
\KwIn{weighted, directed graph $(V,E,W)$, path
  $\pi$, transition relation $\trans$}
\KwOut{test case chain $\chain=\langle i_0,\ldots,i_N\rangle$}
$G \gets \{\{v\}\mid v\in V\}$ //property refinement groups\\
\While{$\true$}{
$(\mathit{feasible},\chain,\failedpath) \gets \chpath(\pi,\trans,W)$\\
\lIf{$\mathit{feasible}$}{\Return $\chain$}\\
$(\mathit{succeeded},W',\failedpath) \gets \mathit{RepairPath}(\failedpath,\trans,W)$\\
\If{$\mathit{succeeded}$}{
$(\_,\chain,\_) \gets \chpath(\pi,\trans,W')$\\
\Return $\chain$
}
$(\assume',\assume,\assume'') = \failedpath$\\
$V\gets V\cup\{\mathit{v_{new}}\}$\\
$\mathit{getGroup}(G,\assume)\gets \mathit{getGroup}(G,\assume) \cup\{\mathit{v_{new}}\}$\\
$E\gets E\cup\{(\assume',\mathit{v_{new}})\}$\\
$W (\assume',\mathit{v_{new}}) \gets W (\assume',\assume'')$\\
$E\gets E\setminus\{(\assume',\assume)\}$\\
$E\gets E\cup\{(\mathit{v_{new}},v)\mid (\assume,v)\in E\setminus\{\assume,\assume'')\}\}$\\

$\pi \gets GetCoveringPath(V,E,G)$\\
\lIf{$\pi=\langle\rangle$}{\KwAbort ``no chain found for given bound $K$''}

\ForEach{$\bar{v} \in \pi$}{
  \ForEach{$v \in \mathit{getGroup}(G,\bar{v})$}{
    \If{$v\neq \bar{v}$}{
      $E' = \{(v',v'')\mid (v',v) \in E \wedge
      (v,v'') \in E \wedge (v',v'')\notin E\}$\\
      \ForEach{$(v',v'') \in E'$}{
        $E \gets E \cup \{(v',v'') \}$\\
        $W(v',v'') \gets W(v',v) +W (v,v'')$\\
        $E \gets E \setminus E'$\\
        $V \gets V \setminus \{v\}$
      }
    } 
  }
}
$\pi \gets GetShortestPath(V,E,W)$\\
}
\caption{\label{alg:getchain3} 
$\mathit{GetChain}$ with abstraction refinement}
\end{algorithm}
}
\rronly{
\begin{algorithm}[h]
\KwIn{transitive closure of directed graph $(V, E)$, property
  refinement groups $G$}
\KwOut{covering path $\pi$}
$v \gets \mathit{chooseFrom}(V)$; $V \gets V \setminus\mathit{getGroup}(G,v)$;
$\pi\gets \langle v \rangle$\\
\While{$V\neq\emptyset$}{
  $v \gets \mathit{chooseFrom}(V)$; $V \gets V \setminus
 \mathit{getGroup}(G,v)$; $v' \gets \mathit{lastElement}(\pi)$ \\
  \lIf{$(v',v)\in E$}{$\pi \gets \mathit{append}(\pi,v)$}\\
  \ElseIf{$(v,v')\in E$}{
     \lWhile{$(v',v)\notin E$}{ $v' \gets \mathit{previousElement}(\pi,v')$ } \\
     $\pi \gets \mathit{insertAfter}(\pi,v,v')$
   }
   \lElse{\Return $\langle\rangle$ //no path found}
}
\Return $\pi$
\caption{\label{alg:getcoverpath} 
$\mathit{GetCoveringPath}$}
\end{algorithm}
}

\rronly{\paragraph{Abstraction refinement} }
To obtain completeness in this situation, 
we propose the following abstraction refinement method\rronly{ shown in Alg.~\ref{alg:getchain3}}.
\pponly{Suppose a covering path $\pi$ in the abstraction turns out to be
infeasible in the concrete program, with $\failedpath=\langle
\assume_1,\ldots,\assume_N\rangle$. }
\rronly{Suppose the chain repair of a covering path $\pi$ failed with 
$\failedpath=\langle \assume_1,\assume,\assume_4\rangle$
($\mathit{succeeded}=\false$ in line 5).}
\begin{compactenum}
\item We refine \pponly{failed vertices $\assume_2,\ldots,\assume_{N-1}$ in $\failedpath$
  by splitting them}\rronly{the graph by splitting vertex $\assume$
  in $\failedpath$} as illustrated in Fig.~\ref{fig:refine1} that
  rules out the infeasible subpath, as
  typically done by abstract refinement algorithms (lines 10--15). We call the
  vertices obtained from such splittings that belong to the same
  property a \emph{property refinement group}\rronly{ (subsets of $G$;
    the function $\mathit{getGroup}(G,v)$ returns the subset
    containing $v$)}.
\item The second part of the proof of Lem.~\ref{lem:existspath} gives
  us an $\mathcal{O}(n^2)$ algorithm \textit{GetCover\-ingPath} for finding a (non-minimal)
  covering path from $\sinit$ to $\sfinal$ in the transitive
  closure of a directed graph\rronly{ (see Alg.~\ref{alg:getcoverpath})},
  taking into account that a covering path needs to cover only one
  vertex for each property refinement group\rronly{ (called in line 16 of Alg.~\ref{alg:getchain3})}.
\item A solution $\pi$ obtained that way might be far from optimal, so we 
  exploit the TSP solver to give us a better solution $\pi'$.
  However, the refined graph does not encode the desired TSP problem
  because it is sufficient to cover only one vertex for each property
  refinement group.  Hence, given a path $\pi$, we transform the graph 
  by collapsing each
  property refinement group with respect to $\pi$ as illustrated by
  Fig.~\ref{fig:refine2}\rronly{ (lines 18--26 of
    Alg.~\ref{alg:getchain3})}.  
The obtained graph is handed over to the
  TSP solver (line 27).  Note that the transformations do
  not preserve optimality, because, \eg~in Fig.~\ref{fig:refine2},
  the edge $(\assume_1,\assume_2)$ would cover $\assume$ in a concrete path but not in
  the transformed, refined abstract graph.
\item We try to compute a concrete test case chain for the covering
  path\rronly{ (lines 3--8)}. If this fails, we iterate the
  refinement process.
\end{compactenum}

In each iteration\rronly{ (line 2)} of the abstraction refinement
algorithm, a node in the graph is split such that a concrete spurious
transition is removed from the abstraction, \ie~the transition system
structure of the program inside the property assumptions is made
explicit in the abstraction.  Provided the existence of a test case chain,
since there is only a finite number of transitions, the abstraction
refinement will eventually terminate, and a covering path will be
found that can be concretised to a test case chain.

\begin{example}[Abstraction refinement]
Assume, as in the previous example, that the right state in $\assume_1$ in
Fig.~\ref{fig:brokenchain1} is not reachable from the left state. Then
the abstraction refinement will split $\assume_1$ into two vertices.
Suppose that \textit{GetCover\-ingPath}\rronly{ (Alg.~\ref{alg:getcoverpath})} returns the covering path
$\pi=\langle\sinit,\assume_2,\assume_1,\assume_2,\sfinal\rangle$.\footnote{It
will actually return the better result for this particular example.} Then collapsing the two nodes
belonging to $\assume_1$ w.r.t. $\pi$ will remove the edge from $\sinit$
to $\assume_1$.  The TSP solver will optimise $\pi$ and find the shorter
path $\langle\sinit,\assume_2,\assume_1,\sfinal\rangle$.
\end{example}

\rronly{
\paragraph{Path constraints.}
The fundamental problem about a failed path is that it represents information
about at least two edges that we cannot encode as an equivalent TSP.
We would need a TSP solver that can deal with side conditions like the following: the
solution must not contain vertices $v_1,v_2,v_3$ in this particular order
for any infeasible subpath $\langle v_1,v_2,v_3\rangle$ in $\failedpath$.
Similar difficulties arise concerning \emph{minimality}: 
here, we would have to add ``path weights'' that penalise a solution 
if it contains a certain path.
Since our experimental results (\S\ref{sec:exp}) suggest that the bottleneck of the
approach lies rather in solving reachability queries than TSPs, we
can opt for using answer set programming (ASP) solvers (\eg~\cite{GKK+11}), which
are far less efficient in solving TSPs, but they allow us to specify
arbitrary side conditions.

\begin{example}[Path constraints]
Consider the graph in Fig.~\ref{fig:brokenchain1}. We can encode the
TSP problem in ASP as follows (cf.~\cite{GKK+11}):\\[1ex]
{\oldfootnotesize
\hspace*{1em}\texttt{V(I,phi1,phi2,F).}\\[1ex]
\hspace*{1em}\texttt{E(I,phi1).    weight(I,phi1,0).}\\
\hspace*{1em}\texttt{E(I,phi2).    weight(I,phi2,2).}\\
\hspace*{1em}\texttt{E(phi1,phi2). weight(phi1,phi2,1)}.\\
\hspace*{1em}\texttt{E(phi1,F).    weight(phi1,F,2).}\\
\hspace*{1em}\texttt{E(phi2,phi1). weight(phi2,phi1,2).}\\
\hspace*{1em}\texttt{E(phi2,F).    weight(phi2,F,2).}\\[1ex]
\hspace*{1em}\texttt{\{ cycle(X,Y) : E(X,Y) \} I :- V(X).}\\
\hspace*{1em}\texttt{\{ cycle(X,Y) : E(X,Y) \} I :- V(Y).}\\[1ex]
\hspace*{1em}\texttt{ reached(Y) :- cycle(I,Y).}\\
\hspace*{1em}\texttt{ reached(Y) :- cycle(X,Y), reached(X).}\\
\hspace*{1em}\texttt{:- V(Y), not reached(Y).}\\[1ex]
\hspace*{1em}\texttt{\#minimize [ cycle(X,Y) : weight(X,Y,C) = C
  ].}\\[1ex]
}
Assume, again, that the right state in $\assume_1$ in
Fig.~\ref{fig:brokenchain1} is not reachable from the left state
so that we obtain $\failedpath=\langle
\sinit,\assume_1,\assume_2\rangle$.
Then we can exclude $\failedpath$ by adding\\[1ex]
{\oldfootnotesize
\hspace*{1em}\texttt{twopath(X,Y,Z) :- cycle(X,Y), cycle(Y,Z).}\\
\hspace*{1em}\texttt{-twopath(I,phi1,phi2).}\\[1ex]
}
to the ASP problem.
The ASP solver will return the shortest covering path that
does not contain $\failedpath$, 
\ie~$\langle\sinit,\assume_2,\assume_1,\sfinal\rangle$.
\end{example}
}

\rronly{
\subsection{Multiple Chains}

\begin{algorithm}[b!]
\KwIn{directed graph $(V,E)$}
\KwOut{partition $S$ of $V$}
$R =$ set of pairs $(v_i,v_j)\in V$ that do not satisfy 
condition (3) of Lem.~\ref{lem:existspath}.\\
$S\gets\emptyset$, $Q\gets V$ \\
\Forall{$(v_i,v_j)\in R$}{
  $Q \gets Q \setminus \{v_i,v_j\}$\\
  \lIf{$S=\emptyset$}{$S\gets \{(\{v_i\},\{v_j\}),(\{v_j\},\{v_i\})\}$}\\
  \Else{
    \Forall{$P=(P^+,P^-)\in S$}{
    \lIf{$v_i\in P^+ \wedge v_j\in P^+$}{$S\gets S\setminus P$}\\
    \lElseIf{$v_i\in P^- \wedge v_j\notin P^- \wedge v_j\notin
      P^+$}{$P \gets (P^+\cup \{v_j\},P^-)$}\\
    \lElseIf{$v_i\notin P^- \wedge v_j\notin P^- \wedge v_j\in
      P^+$}{$P \gets (P^+,P^-\cup \{v_i\})$}\\
    \lElseIf{$v_j\in P^- \wedge v_i\notin P^- \wedge v_i\notin
      P^+$}{$P \gets (P^+\cup \{v_i\},P^-)$}\\
    \lElseIf{$v_j\notin P^- \wedge v_i\notin P^- \wedge v_i\in
      P^+$}{$P \gets (P^+,P^-\cup \{v_j\})$}\\
    \lElseIf{$v_i\notin (P^+\cup P^-) \wedge v_j\notin (P^+\cup P^-)$}{
      $S \gets S \cup (P^+ \cup \{v_i\}, P^- \cup \{v_j\})$;
      $P \gets (P^+ \cup \{v_j\}, P^- \cup \{v_i\})$}
  }
}
}
$S^+ = \mathit{MinCover}(V,\{P^+\mid (P^+,\_)\in S\})$\\
\KwChoose $P^+\in S^+$: $P^+\gets P^+\cup Q$\\
\lForall{$P^+\in S^+$}{$P^+\gets P^+\cup \{\sinit,\sfinal\}$}\\
\Return $S^+$
\caption{\label{alg:partition1} 
Multiple chains: Partitioning the vertex (property) set such that each partition
element admits a single chain}
\end{algorithm}

We can relax our problem to systems that do not admit single chains. 
Those systems still have to satisfy conditions (1) and (2) of
Lem.~\ref{lem:existspath} in order to guarantee the existence of
multiple covering chains.

We can detect that a system does not admit a single chain if
\begin{compactitem}
\item the $N$-reachability property graph has no chain (where $N$ is
  the reachability diameter of the system), or 
\item the chain repair or abstraction refinement process fails.
\end{compactitem}

We use Lem.~\ref{lem:existspath} to devise an
algorithm for computing a partition $\{P_1,\ldots,P_n\}$ of $P$ (see Alg.~\ref{alg:partition1}) and apply
Alg.~\ref{alg:minchain1} for each $P_i$.
If the chain repairing fails for a $P_i$, we compute a partition for the 
refined property graph. Finding the smallest partition is equivalent
to the problem of finding a vertex colouring with minimal chromatic
number (NP-hard).
In Alg.~\ref{alg:partition1}, the set $S$ contains pairs of sets
$(P^+,P^-)$. $P^+$ contains the vertices that will form an equivalence
class. $P^-$ keeps track of
the vertices that are not allowed to be added to $P^+$.
Lines 3 to 13 compute all subsets of $V$ that are consistent with
condition (3) of Lem.~\ref{lem:existspath} ($F$).
Line 14 removes the redundant subsets (minimal set cover) and,
finally, in line 15 and 16, the remaining vertices $Q$ are added to some 
element of the partition, and $\sinit$ and $\sfinal$ are added to all partitions.
}

\section{Test-Case Generation with Bounded Model Checking}\label{sec:inst}

\rronly{
\begin{algorithm}[t]
\KwIn{path $\pi$, transition relation $\trans$, weights W}
\KwOut{whether $\pi$ is $\mathit{feasible}$, $\mathit{inputs}$
  associated to $\pi$ if feasible, $\failedpath\subseteq\pi$ if
  infeasible}
     $\mathit{inputs} \gets \langle\rangle$\\
     $\failedpath \gets \langle\rangle$\\
     $(\mathit{feasible},\mathit{assignment},\mathit{unsat\_core}) = 
          SAT(\mathit{BuildPath}(\pi,\trans,W))$\\
     \If{$\mathit{feasible}$}{
         \KwLet $(s_0,i_0,s_1,i_1,\ldots,s_K,i_K)=assignment$\\
         $\mathit{inputs} \gets \langle i_0,\ldots,i_N\rangle$
      }
      \Else{
         $\failedpath\gets \mathit{getFailedPath}(\mathit{unsat\_core},\pi)$
      }
\Return $(\mathit{feasible},\mathit{inputs},\failedpath)$
\caption{\label{alg:checkpath} 
$\chpath$}
\end{algorithm}

\begin{algorithm}[t]
\KwIn{path $\pi$, transition relation $\trans$, weights $W$}
\KwOut{path formula $\Phi$}
\Return $\mathit{BuildPathRec}(\pi,0,\true)$\\
\Func{$\mathit{BuildPathRec}(\pi,k,\Phi)$}{
\If{$\pi=\langle (\varphi,\_)\rangle$}{
   \Return $\Phi \wedge \varphi(s_k)$
}
\Else{
  \KwLet $(v,\pi_{tail}) = \pi$\\
  \KwLet $(v',\_) = \pi_{tail}$\\
  \KwLet $k_{end}= k+W(v,v')$\\
  \KwLet $(\varphi,\psi) = v$\\
  \Return $\Phi \wedge \varphi(s_k,i_k) \wedge \psi(s_{k+1}) \wedge \bigwedge_{k+1\leq j\leq
    k_{end}}T(s_{j-1},s_j) \wedge \mathit{BuildPathRec}(\pi_{tail},k_{end},\Phi)$
}
}
\caption{\label{alg:buildpath} 
$\mathit{BuildPath}$}
\end{algorithm}
}

\rronly{
\begin{algorithm}[t]
\KwIn{weighted, directed graph $(V, E,W)$, transition relation $\trans$,  edges to be considered $E_S$, number of steps $K$}
\KwOut{$K$-reach edges $E_K\subseteq E_S$}
$\mathit{from\_to}\gets E_S$\\
$E_K\gets \emptyset$\\
$(sat,assignment) \gets \chkreach(\mathit{from\_to},\trans,K)$\\
\While{$sat$}{
    \KwLet $(s_0,i_0,s_1,i_1,\ldots,s_K,i_K)=assignment$\\
    \Forall{$v,v' \in V:$ $(\varphi,\psi)=v,(\varphi',\_)=v': \varphi(s_0,i_0)
      \wedge \psi(s_1) \wedge \varphi'(s_K)$}{
      $E_K\gets E_K \cup \{(v,v')\}$\\
      $\mathit{from\_to}\gets \mathit{from\_to} \setminus \{(v,v')\}$\\
    }
    $(sat,assignment,\_) \gets \chkreach(\mathit{from\_to},\trans,K)$\\
}
\Return $E_K$
\caption{\label{alg:kreach} 
$\getkedges$}
\end{algorithm}

\begin{algorithm}[t]
\KwIn{$\failedpath$, transition relation $\trans$, weights $W$, 
reachability bound $K$}
\KwOut{updated weights $W$}
$\sigma \gets \mathit{FirstElement}(\failedpath)$\\
\Forall{$e=(\assume_j,\assume_{j+1})\in \failedpath$}{
$\mathit{feasible} \gets \false$ \\
\While{$\neg \mathit{feasible}$}{
  $(\mathit{sat},\mathit{assignment},\_) \gets \chpath(\langle
 \sigma,\assume_{j+1}\rangle,\trans,W)$ \\
  \lIf{$\neg \mathit{feasible}$}{$W(e) \gets W(e)+1$} \\
  \Else{
    \KwLet $\langle s_0,\ldots\rangle = \mathit{assignment}$ \\
    $\sigma \gets s_0$}
  \lIf{$W(e)>K$} \Return $(\false,W,\langle \assume_{j-1},\assume_j,\assume_{j+1}\rangle)$
}
}
\Return $(\true,W,\langle\rangle)$
\caption{\label{alg:repairpath1} 
$\mathit{RepairPath}$ by concrete chaining}
\end{algorithm}
}

The previous sections abstract from the
actual backend implementation of the functions 
$\getkedges$, $\chpath$, and $\mathit{RepairPath}$.
In this work, we use bounded model checking to provide an efficient implementation.
Alternative instantiations could be based on symbolic execution, for example.

\paragraph{BMC-based test case generation}
Bounded model checking (BMC) \cite{CBRZ01} can be used to check the
existence of a path $\pi=\langle s_0,s_1,\ldots,s_K\rangle$ of
increasing length $K$ from $\phi$ to $\phi'$.
This check is performed by deciding satisfiability of the following formula
using a SAT solver:
\negskip
\begin{equation}\label{equ:query}
\phi(s_0)\wedge\bigwedge_{1\leq k\leq K} \trans(s_{k-1},i_{k-1},s_k) \wedge \phi'(s_K)
\end{equation}
\negskip

If the SAT solver returns the answer \emph{satisfiable}, it also provides a
satisfying assignment $(s_0,i_0,s_1,i_1,\ldots,s_{K-1},i_{K-1},s_K)$.
The satisfying assignment represents one possible path $\pi=\langle
s_0,s_1,\ldots,s_K\rangle$  from $\phi$
to $\phi'$  and identifies the corresponding input sequence $\langle i_0,\ldots,i_{K-1}\rangle$.
Hence, a test case $\langle i_0,\ldots,i_{K-1}\rangle$ covering
a property with assumption $\assume(s,i)$ can be generated by checking
satisfiability of a path from $\sinit$ to $\assume$.

\paragraph{Instantiation}
\rronly{For implementing Alg.~\ref{alg:minchain1} with chain repair
(Alg.~\ref{alg:getchain2}) we have to provide the functions
$\chpath$, $\getkedges$, and $\mathit{RepairPath}$.

We consider a SAT solver to be a function $SAT: \phi \mapsto
(\mathit{sat},\mathit{assignment},$ $\mathit{unsat\_core})$ where
$\mathit{assignment}$ contains a satisfying assignment if $\phi$ is
$sat$ and otherwise $\mathit{unsat\_core}$ is a minimal formula such that
$\phi \Rightarrow \mathit{unsat\_core} $ and
$\neg\mathit{unsat\_core}$ $\Rightarrow \neg\phi$.\footnote{There are alternatives to
unsatisfiability cores, \eg, the final conflict feature
of \textsc{Minisat}~\cite{EMA10}.}

Then $\chpath$ is defined as in Alg.~\ref{alg:checkpath} where 
$\mathit{BuildPath}$ (Alg.~\ref{alg:buildpath}) constructs the BMC
formula for a given path, 
and $\mathit{getFailedPath}$ converts an $\mathit{unsat\_core}$ into 
a path (which is SAT solver-specific).
}
\pponly{
$\chpath$ corresponds to a SAT query like Eq.~(\ref{equ:query}) with
$\phi=\sinit$, $\phi'=\sfinal$, conjoined with the formulas for the property
assumptions $\assume_j$ according to the covering path $\pi$.
The formula for $\trans$ includes any assumptions 
restricting the input domain.
If the query is unsatisfiable, the SAT solver can be requested to produce a
reason for the failure (\eg~in the form of an unsatisfiable core) from which
we can extract $\failedpath$.
}

\rronly{
$\getkedges$ is given as Alg.~\ref{alg:kreach}, where the function
$\chkreach(\pi,\trans,K)$ that is used for enumerating
$K$-reachability edges is implemented by checking
satisfiability of the following formula: }
\pponly{
The function $\getkedges$ that returns the 
$K$-reachability edges in a set $E_{target}$ 
is implemented by satisfiability checking the formula: 
}
\negskip
\begin{equation}\label{equ:kreach}
\left(\bigvee_{(\varphi,\varphi')\in
      E_{target}} \varphi(s_0,i_0) \wedge
    \varphi'(s_K)\right)
   \wedge\bigwedge_{1\leq k\leq K} \trans(s_{k-1},i_{k-1},s_k) 
\end{equation}
\negskip

We iteratively check this formula using incremental SAT solving, 
``removing'' the respective terms from
the formula each time a solution satisfies $(\assume,\assume')$,
until the formula becomes unsatisfiable.
In addition to assumptions on the inputs, 
$\trans$ must also contain a state invariant, obtained, \eg~with a static
analyser.  This is necessary because, otherwise, the state satisfying
$\assume$ in Eq.~\ref{equ:kreach} might be unreachable from an
initial state.

For the chain repair $\mathit{RepairPath}$, the most efficient method
that we tested was to sequentially find a feasible weight for each of
the edges in $\failedpath$, starting the check for an edge
$(\assume_j,\assume_{j+1})$ from a concrete state in $\assume_j$ obtained
from the successful check of the previous edge
$(\assume_{j-1},\assume_j)$. \rronly{This algorithm is listed in Alg.~\ref{alg:repairpath1}.}

\section{Experimental Evaluation}\label{sec:exp}

\paragraph{Implementation}
For our experiments we have set up a tool chain (Fig.~\ref{fig:tool1})
that generates C code from \textsc{Simulink} models using
the
\textsc{Gene-Auto}\footnote{\url{http://geneauto.gforge.enseeiht.fr},
  version 2.4.9} 
code generator.
Our test case chain generator
\textsc{ChainCover}\footnote{\url{http://www.cprover.org/chaincover/},
  version 0.1} itself is built upon the infrastructure provided by 
\textsc{Cbmc}\footnote{\url{http://www.cprover.org/cbmc/}, version 4.4} \cite{CKL04}
with \textsc{MiniSat}\footnote{\url{http://minisat.se}, version 2.2.0} as a SAT backend,
the \textsc{Lkh} TSP
solver\footnote{\url{http://www.akira.ruc.dk/~keld/research/LKH/},
  version 2.0.2}
\cite{Hel00}, and the 
\textsc{Clingo} ASP
solver\footnote{\url{http://potassco.sourceforge.net/},
  version 3.0.5}
\cite{GKK+11}.

\begin{figure}[t]
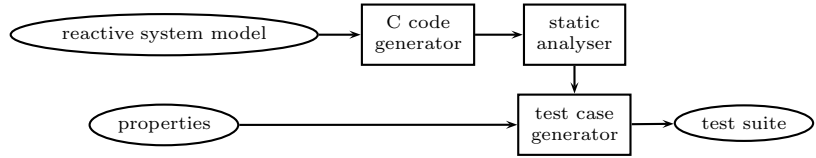

\centering
\vspace*{-2ex}
\scriptsize
\psset{arrows=->}
\begin{tabular}{c@{\hspace{2em}}c@{\hspace{2em}}c@{\hspace{2em}}c}
\ovalnode{sm}{reactive system model} &
\rnode{cg}{\psframebox{$\begin{array}{c}\text{C code}\\ \text{generator}\end{array}$}} &
\rnode{sa}{\psframebox{$\begin{array}{c}\text{static}\\ \text{analyser}\end{array}$}} \\[6ex]
\ovalnode{pr}{properties} && \rnode{tg}{\psframebox{$\begin{array}{c}\text{test case}\\ \text{generator}\end{array}$}} &
\ovalnode{tc}{test suite}
\end{tabular}
\ncline{sm}{cg}
\ncline{cg}{sa}
\ncline{sa}{tg}
\ncline{pr}{tg}
\ncline{tg}{tc}
\vspace*{1ex}
\caption{\label{fig:tool1} Tool chain
}
\end{figure}

The properties are written in C\rronly{ using the \texttt{assert} and
\texttt{\_\_CPROVER\_assume} macros}. 
For instance, property $p_1$ in our example is stated as follows:
{\oldfootnotesize
\begin{lstlisting}
void p_1(t_input* i, t_state* s) {
  __CPROVER_assume(s->mode==ON && s->speed==1 && i->dec);
  compute(i,s);
  assert(s->speed==1);
}
\end{lstlisting}}
Assumptions on the inputs and the state invariant obtained from
the static analysis are written as C code in a similar way.

\paragraph{Benchmarks}
Our experiments are based on \textsc{Simulink} models, mainly from automotive industry.
\rronly{For some benchmarks, we had the \textsc{Simulink} models or at least
the generated C code available;
for others we only had screenshots from the
\textsc{Simulink} models, which we had to re-engineer ourselves. } %
Our benchmarks are
a simple \emph{cruise} control model \cite{Bos07},
a \emph{window} controller\footnote{\url{http://www.mathworks.co.uk/products/simulink/examples.html}},
a car \emph{alarm} system\footnote{\url{http://www.mogentes.eu/public/deliverables/}\\\url{MOGENTES_3-15_1.0r_D3.4b_TestTheories-final_main.pdf}},
an elevator model \cite{MS11}, and
a model of a \emph{robot arm} that can be controlled with a joystick.
We generated test case chains for these examples for specifications of
different size and granularity.
The benchmark characteristics are listed in Table~\ref{tab:results}.
Apart from \emph{Cruise 1} all specifications have properties with 
multi-state assumptions, thus, the obtained test case chains are not
minimal in general. All our benchmarks are (almost) strongly connected (some
have an initial transition after which the system is strongly
connected), hence, they did not require abstraction refinement.

\begin{table}[t]
\centering
\begin{tabular}{|l|c|c|c|c|c|r@{}l|c|c|r@{}l|c|c|r@{}l|c|c|r@{}l|c|c|}
\hline
 & \multicolumn{3}{|c|}{size}
& \multicolumn{4}{|c|}{\textsc{ChainCover}} 
& \multicolumn{4}{|c|}{\textsc{FShell}} 
& \multicolumn{4}{|c|}{random} 
& \multicolumn{4}{|c|}{\textsc{KLEE}} 
\\
\hline
benchmark  & s     & i  & P  & tcs & len & \multicolumn{2}{|c|}{time} &  tcs & len & \multicolumn{2}{|c|}{time} &  tcs & len & \multicolumn{2}{|c|}{time}  & tcs & len & \multicolumn{2}{|c|}{time} \\\hline
Cruise 1   & 3b    & 3b &  4 & 1 &  9 & 0&.77 & 3 & 18 & 3&.67 & 2.8 &24.6 & 0&.54 & 3 & 27 & 46&.5\\
Cruise 2   & 3b    & 3b &  9 & 1 & 10 & 0&.71 & 4 & 20 & 3&.56 & 2.4&21.2 & 0&.07 & 3 & 30 & 17&.7 \\
\hline
Window 1   & 3b+1i & 5b &  8 & 1 & 24 & 14&.1 & 4 & 32 & 19&.0 & 1.8 &40.4 & 58&.9& 3 & 72 & 155&\\
Window 2   & 3b+1i & 5b & 16 & 1 & 45 & 24&.9  & 7 & 56 & 28&.3 & 2.0 &86.8 & 18&.7& 5 & 225 & 242&\\
\hline
Alarm 1    & 4b+1i & 2b &  5 & 1 & 26 & 7&.51 & 1 & 27 & 509&& \multicolumn{2}{|c|}{80\% cov.} & \multicolumn{2}{|c|}{t/o} & \multicolumn{2}{|c|}{60\% cov.} & \multicolumn{2}{|c|}{t/o} \\
Alarm 2    & 4b+1i & 2b & 16 & 1 & 71 & 33&.5& 3 & 81 & 690&&  \multicolumn{2}{|c|}{94\% cov.} & \multicolumn{2}{|c|}{t/o}& \multicolumn{2}{|c|}{63\% cov.} & \multicolumn{2}{|c|}{t/o} \\
\hline
Elevator 1 & 6b    & 3b &  4 & 1 &  8 & 22&.9 & 2 & 15 & 115&& 2.2 &10.4 & 0&.85 & 2 & 16 & 24&.4 \\
Elevator 2 & 6b    & 3b & 10 & 1 & 32 & 97&.3& 5 & 54 & 789& & 2.6 & 49.0 & 65&.8& \multicolumn{2}{|c|}{70\% cov.} & \multicolumn{2}{|c|}{t/o} \\
Elevator 3 & 6b    & 3b & 19 & 1 & 48 & 458&& 6 & 54 & 838&& 4.0& 149 & 18&.0& \multicolumn{2}{|c|}{53\% cov.} & \multicolumn{2}{|c|}{t/o} \\
\hline
Robotarm 1 & 4b+2f & 3b &  4 & 1 & 25  & 185&& 2 & 22 & 362&& 2.4 &49.0 & 0&.07 & 2 & 40 & 10&.9 \\
Robotarm 2 & 4b+2f & 3b & 10 & 1 & 47   & 113&& 2 & 33& 532&& 3.8 & 72.2 & 0&.21& \multicolumn{2}{|c|}{80\% cov.} & \multicolumn{2}{|c|}{t/o} \\
Robotarm 3 & 4b+2f & 3b & 18 & 1 &  84  & 427&& 5 & 55 & 731&& 3.2 & 160 & 0&.62& \multicolumn{2}{|c|}{67\% cov.} & \multicolumn{2}{|c|}{t/o} \\
\hline
\end{tabular}~\\[2ex]
\caption{\label{tab:results}
{\normalfont
Experimental results: The table lists the number of test cases/chains
(tcs), the accumulated length of the test case chains (len), and the
time (in seconds) taken for test case generation. Size indicates the
size of the program in the number of (minimally encoded) Boolean (b), integer
(i) and floating point (f) variables and
(minimally encoded) Boolean (b)
inputs. ``P'' is the number of properties
in the specification. If the tool timed out (``t/o'') after
1 hour the achieved coverage (``cov'') is given.
}}
\end{table}

\paragraph{Comparison}
We have compared our tool \textsc{ChainCover} (using \textsc{Lkh}) with
\begin{compactitem}
\item \textsc{FShell}\footnote{\url{http://forsyte.at/software/fshell/}} 
\pponly{\cite{HSTV08}}\rronly{\cite{HSTV08,HSTV09}}, an efficient test generator
with test suite minimisation, 
\item an in-house, simple \emph{random} case generator with test suite
  minimisation, and
\item \textsc{Klee}\footnote{\url{http://klee.llvm.org/}}
  \cite{CDE08}, a test case generator based on symbolic execution.
\end{compactitem}
In order to make results comparable, we have chosen $\sfinal$ to be
equivalent to $\sinit$ (or the state after the initial transition).
Hence, test cases generated by \textsc{FShell}, random, and
\textsc{Klee} can be concatenated (disregarding the initial
transition) to get a single test case chain.

Like our tool, \textsc{FShell} is based on bounded model
checking.  \textsc{FShell} takes a coverage specification in form of a
query as input.  It computes test cases
that start in $\sinit$, cover one or more properties
$p_1,\ldots,p_n$ and terminate in $\sfinal$ when given the query:
\texttt{cover (@CALL(p\_1) | ... | @CALL(p\_n)) -> @CALL(final)}.  In the
best case, \textsc{FShell} returns a single test case, \ie~a test
chain.  We have run \textsc{FShell} with increasing unwinding bounds
$K$ until all properties were covered.

\begin{figure}[t]
\vspace*{10ex}
\hspace{2em}
\begin{minipage}{0.4\textwidth}
\vspace*{15ex}
\psset{xunit=12pt,yunit=0.15pt}
\def\psxlabel#1{\oldfootnotesize #1}
\def\psylabel#1{\oldfootnotesize #1}
\savedata{\tchain}[
{{0,0},{1, 8}, {2, 17}, {3, 27}, {4,51}, {5, 76},
{6, 102}, {7, 134}, {8, 179}, {9, 226},{10,274}, {11,345}, {12,429}}]
\savedata{\fshell}[
{{0, 0}, {1,15}, {2, 33}, {3, 53}, {4, 75},{5,102}, 
{6,134}, {7,167}, {8,221},{9,275}, {10,330},{11,386},{12,467}}]
\savedata{\random}[
{{0, 0}, {1,10.4}, {2, 31.6}, {3, 56.2}, {4,96.6},{5,146}, 
{6,195}, {7,267}, {8,354},{9,503},{10,663}}]
\savedata{\klee}[
{{0, 0}, {1,16}, {2, 43}, {3, 73}, {4,113},{5,185}, 
{6,410}}]
\dataplot[plotstyle=curve,showpoints=true,
dotstyle=triangle]{\tchain}
\dataplot[plotstyle=curve,showpoints=true,
dotstyle=asterisk]{\fshell}
\dataplot[plotstyle=curve,showpoints=true,
dotstyle=diamond]{\random}
\dataplot[plotstyle=curve,showpoints=true,
dotstyle=square]{\klee}
\psaxes[Dy=50,dy=50,labelFontSize=\scriptstyle](0,0)(12,650)
\rput(6,-20pt){\oldfootnotesize Number of benchmarks}
\rput{90}(-2.5,325){\oldfootnotesize Accumulated test case lengths}
\rput[l](1,600){\footnotesize $\Box\,$ \textsc{Klee}}
\rput[l](1,550){\footnotesize $\Diamond\;$ \textsc{RandomTest}}
\rput[l](1,500){{\large $\ast$}\footnotesize$\,$ \textsc{FShell}}
\rput[l](1,450){\footnotesize $\bigtriangleup$ \textsc{ChainCover}}
\vspace*{5ex}
\end{minipage}
\hspace{5em}
\begin{minipage}{0.4\textwidth}
\vspace*{15ex}
\psset{xunit=12pt,yunit=0.02pt} 
\def\psxlabel#1{\oldfootnotesize #1}
\def\psylabel#1{\oldfootnotesize #1}
\savedata{\tchain}[
{{1,0.77}, {2,1.48}, {3,9}, {4,23.1}, {5,46},
{6,70.9}, {7,104}, {8,202}, {9,315},{10,500}, {11,927}, {12,1385}}]
\savedata{\fshell}[
 {{1,3.56}, {2,7.23}, {3,26.2}, {4,54.5},{5,170}, 
 {6,532}, {7,1041}, {8,1573},{9,2263}, {10,2994},{11,3783},{12,4621}}]
\savedata{\random}[
 {{1,0.07}, {2,0.14}, {3,0.35}, {4,0.89},{5,1.5}, 
 {6,2.36}, {7,20.4}, {8,39.1},{9,98},{10,164}}]
\savedata{\klee}[
{{1,10.9}, {2, 28.6}, {3, 53}, {4,99.5},{5,255}, 
{6,497}}]
\psaxes[labelFontSize=\scriptstyle,Dy=1000](0,0)(12,5000) 
\dataplot[plotstyle=curve,showpoints=true,dotstyle=triangle]{\tchain}
\dataplot[plotstyle=curve,showpoints=true,dotstyle=asterisk]{\fshell}
\dataplot[plotstyle=curve,showpoints=true,dotstyle=diamond]{\random}
\dataplot[plotstyle=curve,showpoints=true,dotstyle=square]{\klee}
\rput{90}(-2.5,2500){\oldfootnotesize Accumulated runtimes} 
\psset{xunit=12pt,yunit=0.2pt}
\rput(6,-20pt){\oldfootnotesize Number of benchmarks}
\psset{xunit=12pt,yunit=0.15pt}
\rput[l](1,600){\footnotesize $\Box\,$ \textsc{Klee}}
\rput[l](1,550){\footnotesize $\Diamond\;$ \textsc{RandomTest}}
\rput[l](1,500){{\large $\ast$}\footnotesize$\,$ \textsc{FShell}}
\rput[l](1,450){\footnotesize $\bigtriangleup$ \textsc{ChainCover}}
\vspace*{5ex}
\end{minipage}
\caption{\label{fig:results}
Experimental results: accumulative graph of test case lengths on the
left-hand side, accumulated runtimes on the right-hand side.
}
\end{figure}

For random testing and \textsc{Klee}, we coded the requirement to
finish a test case in $\sfinal$ with the help of flags in the test
harness.
Then we stopped the tools as soon as full coverage was achieved and
selected the test cases achieving full coverage while minimising the
length of the input sequence using an in-house, 
weighted-minimal-cover-based test suite minimiser.
For random testing we averaged the results over five runs.
Unlike \textsc{ChainCover} and \textsc{FShell}, which start test 
chain computation without prior knowledge of how many
steps are needed to produce a test case, we had to provide random testing and
\textsc{Klee}  with this information.
The reason is that the decision when a certain number of steps will not
yield a test case can only be taken after reaching a timeout for random
testing.  Similarly, \textsc{Klee} may take hours to terminate. 
Consequently, the results for random testing and \textsc{Klee} are not fully
comparable to those of the other tools.

\paragraph{Results}
Experimental results obtained are shown in Table~\ref{tab:results} and
Fig.~\ref{fig:results}. 
\begin{compactitem}
\item Our tool \textsc{ChainCover} usually succeeds
  in finding shorter test case chains than the other tools. It is also in general
  faster. \textsc{ChainCover} spends more than 99\% of its runtime with BMC.
  The time for solving the ATSP problem is neglible for the number of
  properties we have in the specifications.  The runtime ratio for
  generating the property $K$-reachability graph ($\mathcal{O}(Kn^2)$
  BMC queries for $n$ properties) versus finding and repairing a chain
  ($\mathcal{O}(Kn)$ BMC queries) varies between 7:92 and 75:24.
\item \textsc{FShell} comes closest to \textsc{ChainCover} with respect to
  test case chain length, and finds shorter chains on the robot arm example. 
  However, \textsc{FShell} takes much longer: the computational
  cost depends on the number of unwindings and the size of the
  program and less on the number of properties.
\item  Random testing yields very good results on some (small)
  specifications and sometimes even finds chains that are as short as those
  generated by \textsc{ChainCover}.  However, the results vary and heavily
  depend on the program and the specification: in some cases, \eg~\emph{Robotarm},
  full coverage is achieved in fractions of a second; in other cases, full
  coverage could not be obtained before reaching the timeout of one hour and
  generating millions of test cases.
\item \textsc{Klee} found test case chains on a few of the benchmarks in
  very short time, but did not achieve full coverage within an hour on
  half of the benchmarks, which suggests that exhaustive
  exploration is not suitable for our problem.
\end{compactitem}

\section{Related Work}\label{sec:rw}

Test case generation with model checkers came up in the mid-90s and
has attracted continuous research interest since then, especially due
to the enormous progress in SAT solver performance. 
There is a vast literature on this topic, surveyed in~\cite{FWA09},
for example.
\rronly{
The \textsc{FShell} tool~\cite{HSTV09,HSTV08} we have compared with was
developed with the motivation of enabling the flexible specification of the
desired coverage.  
}

\rronly{
\paragraph{Reactive system testing}
There are many approaches to reactive system testing:
While random testing \rronly{\cite{DN84}} is still commonly used,
approaches have been developed that combine random testing
with \emph{symbolic and concrete execution} \pponly{\cite{GKS05,SA06,CDE08}}\rronly{
(\textsc{Dart}
\cite{GKS05}, \textsc{Cute} \cite{SA06}, \textsc{Klee} \cite{CDE08})}
to guide exhaustive path enumeration.
\emph{Scenario-based testing} employ test specifications to guide test
case generation towards a particular functionality
\pponly{\cite{BORZ99,JRB06,RRJ08}}\rronly{(\eg, \textsc{Lutess} \cite{BORZ99}, \textsc{Lurette} \cite{JRB06}, \textsc{Lutin}
\cite{RRJ08})}.  These methods restrict the input space using
static analysis and apply (non-uniform) random test case generation.
\emph{Model-based testing} (see \cite{PSM12,LY96a} for surveys on this
topic) considers specification models based on labelled transition
systems. For instance, extended finite state machines (EFSM)
\pponly{\cite{UY91}}\rronly{\cite{LY96b,UY91,PBG04}} 
are commonly used in communication protocol
testing to provide exhaustive test case generation for conformance
testing\pponly{\cite{JJ05,Tre08}}. \rronly{Available tools include, \eg, \textsc{Tgv} \cite{JJ05} and
\textsc{TorX} \cite{Tre08}.}
}

\rronly{
\paragraph{Minimal checking sequences and test optimisation}
}
In the model-based testing domain, the problem of finding minimal checking
sequences has been studied in \emph{conformance
testing}~\pponly{\cite{NMHN13,PSY12,HU10}}\rronly{\cite{NMHN13,PSY12,HU10,HU06,Hie04}}, which amounts to checking
whether each state and transition in a given EFSM specification is correctly
implemented.
First, a minimal checking path is computed, which might be infeasible due to
the operations on the data variables.  Subsequently, random test case
generation is applied to discover such a path, which might fail again. 
Duale and Uyar~\cite{DU04} propose an algorithm for finding a feasible
transition path, but it requres guards and assignments in the models to be
linear.  Another approach is to use genetic algorithms~\cite{NMHN13,KHS09}
to find a feasible path of minimised length.
\rronly{
Also in our setting, the use of genetic algorithms in order to find minimised
instead of minimal solutions is an option to consider.
}
SAT solvers have also been used to compute (non-minimal) checking sequences
in FSM models \cite{JUYZ09,MOF+03}.
Our method does not impose restrictions on guards
and assignments and implicitly handles low-level issues such as
overflows and the semantics of floating-point arithmetic in finding
feasible test cases.
The fact that minimal paths on the abstraction might not be feasible 
in the concrete program does not arise due to limited reasoning
about data variables, but due to the multi-state nature of the 
properties we are trying to cover. 

Closest to our work is recent work~\cite{PVL11} on generating test chains for EFSM models 
with timers. They use SMT solvers to find a path to the nearest test
goal and symbolic execution to constrain the search space. If no test
goal is reachable they backtrack to continue the search from an
earlier state in the test chain.
Their approach represents a greedy heuristics and thus makes
minimality considerations difficult. Our method can handle timing
information if it is explicitly expressed as counters in the program.

Petrenko et al~\cite{PDRM13} propose a method for test optimisation for EFSM models
with timers. They use an ATSP solver to find an optimal ordering of a given set
of test cases and an SMT solver to determine paths connecting them.
The problem they tackle is easier than ours because they do not
generate test cases, but just try to chain a given set of test cases
in an optimal order. Additionally, they take into account overlappings of test
cases during optimisation.

In contrast to all these works, our approach starts from a partial specification given by 
 a set of properties, usually formalised from high-level requirements. 
The $K$-reachability graph abstraction can be viewed as the generation of a model from a
partial specification and automated annotation of model transitions
with timing information in terms of the minimal number of steps required.

\section{Summary and Prospects}\label{sec:concl}

We have presented a novel approach to discovering a minimal
test case chain, \ie, a single test case that covers a given set of test goals in a
minimal number of execution steps.
Our approach combines reachability analysis to build an abstraction,
TSP-based optimisation and heuristics to find a concrete solution
in case we cannot guarantee minimality.
The test goals might also be generated from an EFSM specification or from code
coverage criteria like MC/DC.  This flexibility is a distinguishing
feature of our approach that makes it equally applicable to
model-based and structural coverage-based testing.
In our experimental evaluation, we have shown that
our tool \textsc{ChainCover} outperforms
state-of-the-art test suite generators.
\rronly{Moreover, our approach is not restricted to C code generated from
\textsc{Simulink}---it can be applied to any reactive system
language.  For instance, we could also consider consider Verilog, or
the application to HW/SW-co-verification combing Verilog and C code.}

\paragraph{Prospects}
\pponly{ 
In \S\ref{sec:compl} we have proposed an abstraction
refinement method in the case of multi-state property triggers.  The
fundamental problem is that a failed path represents information
about at least two edges that we cannot encode as an equivalent TSP
because it requires side conditions such as the solution not
containing a set of subpaths.
Since our experimental results suggest that the bottleneck of the
approach lies rather in solving reachability queries than TSPs, we
could also opt for using answer set programming (ASP) solvers (\eg~\cite{GKK+11}), which
are less efficient in solving TSPs, but they allow us to specify
arbitrary side conditions.
}%
\rronly{Deep loops pose a problem for BMC-based methods.  For instance,
we had to reduce size of loop bound constants in the car \emph{alarm}
system benchmark to make it tractable for comparison.
Acceleration methods, \eg~\cite{KLW13}, are expected to remedy many such
situations, especially those involving counters.}

\rronly{Moreover, the property $K$-reachability graph generation lends
itself to parallellisation.  This is expected to give a further boost to the
capacity of our tool.}

Test case chains are intended to demonstrate conformance in late
stages of the development cycle, especially in acceptance tests when
the system can be assumed stable.
It is an interesting question in how far they can be used in earlier
phases: The test case chains computed by our method are able to
continue to the subsequent test goals even if a test fails, as long as
the implementation has not changed too much; otherwise the test chain
has to be recomputed.  In this case, it would be desirable to
incrementally adapt the test case chain after bug fixes and code
changes.

\bigskip\paragraph{\textsc{Acknowledgements}}
We thank Cristian Cadar for his advice regarding the
comparison with \textsc{Klee}, and the anonymous reviewers for their
invaluable comments.

\bibliographystyle{splncs}
\bibliography{biblio}

\end{document}